\newcommand{\distD}{\mathcal{D}}
\newcommand{\rev}[1]{\mathrm{Rev}{(#1)}}
\newcommand{\detrev}[1]{\mathrm{DetRev}{(#1)}}
\newcommand{\srev}[1]{\mathrm{SRev}{(#1)}}
\newcommand{\brev}[1]{\mathrm{BRev}{(#1)}}
\newcommand{\bmrev}[1]{\mathrm{BuyManyRev}{(#1)}}
\newcommand{\bkrev}[2]{\mathrm{Buy}^{#2}\mathrm{Rev}{(#1)}}
\newcommand{\btworev}[1]{\mathrm{Buy2Rev}{(#1)}}
\newcommand{\lottery}{\vec{\mathrm{Lot}}}
\newcommand{\mech}{{\mathcal{M}}}
\newtheorem{theorem}{Theorem}
\newtheorem{observation}[theorem]{Observation}
\newtheorem{claim}[theorem]{Claim}
\newtheorem{lemma}[theorem]{Lemma}
\newtheorem{corollary}[theorem]{Corollary}
\newtheorem{obs}{Observation}
\newtheorem{definition}{Definition}
\newtheorem{example}{Example}
\newtheorem{question}{Question}
\newtheorem{conjecture}{Conjecture}
\newcommand{\gaptwo}[4]{\mathrm{gap}^{#1}_{#2} (#3, #4)}
\newcommand{\MenuGapTwo}[3]{\mathrm{MenuGap}^{#1} ( #2, #3 )}
\begin{document}

\title{Fine-Grained Buy-Many Mechanisms Are Not Much \\ Better Than Bundling}
\author{
  Sepehr Assadi\thanks{The authors would like to thank S. Matthew Weinberg and anonymous reviewers for providing feedback on the presentation of this paper.} \footnote{Department of Computer Science, Rutgers University.  Research supported in part by a NSF CAREER Grant CCF-2047061, and a gift from Google Research.}\\
  \texttt{sepehr.assadi@rutgers.edu}
  \and Vikram Kher\footnote{University of Southern California. This work was carried out while Vikram Kher was a participant in the 2022 DIMACS REU program at Rutgers University, supported by NSF grants CCF-1852215 and CNS-2150186.}\\
  \texttt{vkher@usc.edu}
  \and 
  George Li \footnote{University of Maryland. This work was carried out while George Li was a participant in the 2022 DIMACS REU program at Rutgers University, supported by NSF grants CCF-1852215 and CNS-2150186}\\ \texttt{gzli929@umd.edu} 
  \and
  Ariel Schvartzman  \footnotemark[1] \footnote{Google Research. Part of this research conducted while the author was at DIMACS, Rutgers University and was supported by the National Science Foundation under grant number CCF-1445755.} \\
  \texttt{aschvartzman@google.com}
}
\date{}

\maketitle

\begin{abstract}
    
    
    Multi-item revenue-optimal mechanisms are known to be extremely complex, often offering buyers randomized lotteries of goods. In the standard buy-one model, it is known that optimal mechanisms can yield revenue infinitely higher than that of any ``simple" mechanism---the ones with size polynomial in the number of items---even with just two items and a single buyer~\cite{BriestCKW15, HartN17}. 
        
    We introduce a new parameterized class of mechanisms, buy-$k$ mechanisms, which smoothly interpolate between the classical buy-one mechanisms and the recently studied buy-many mechanisms~\cite{CTT19, CTT20, CTT20b, RTT22}. Buy-$k$ mechanisms allow the buyer to buy up to $k$ many menu options. We show that restricting the seller to the class of buy-$n$ incentive-compatible mechanisms suffices to overcome the bizarre, infinite revenue properties of the buy-one model. Our main result is that the revenue gap with respect to bundling, an extremely simple mechanism, is bounded by $O(n^2)$ for any arbitrarily correlated distribution $\distD$ over $n$ items for the case of an additive buyer. Our techniques also allow us to prove similar upper bounds for arbitrary monotone valuations, albeit with an exponential factor in the approximation.
    
    On the negative side, we show that allowing the buyer to purchase a small number of menu options \emph{does not suffice} to guarantee sub-exponential approximations, even when we weaken the benchmark to the optimal buy-$k$ \emph{deterministic} mechanism. If an additive buyer is only allowed to buy $k = \Theta(n^{1/2-\varepsilon})$ many menu options, the gap between the revenue-optimal deterministic buy-$k$ mechanism and bundling may be exponential in $n$. In particular, this implies that no ``simple" mechanism can obtain a sub-exponential approximation in this regime. 

\end{abstract}

\section{Introduction}

How should a revenue-maximizing seller price an item for sale when facing a buyer with a private value for the item? If the seller knows the distribution of values, seminal work of Myerson~\cite{Myerson81} showed that it is optimal for the seller to offer the item at a take-it-or-leave-it price. The answer to this question becomes unclear for the case of multiple, even just two, items.

Optimal multi-item  auctions are known to be complex objects, offering no discernible mathematical structure and often exhibiting ``intuition-defying" properties~\cite{Daskalakis15}. A particularly egregious one is that there exist correlated distributions over just two items such that the revenue-optimal mechanism is \emph{infinitely} better than any ``simple" mechanism, ruling out the possibility of good worst-case approximations~\cite{HartN13, HartN17}.\footnote{By ``simple" mechanisms, we mean mechanisms of size polynomial in the number of items.} The bizarre aspect of these pathological distributions is that the optimal revenue is \emph{unbounded}, but any finite-sized mechanism can get at most finite revenue. One possible explanation for this bizarre phenomenon is that the seller is unrestricted  in their choice of mechanism: they only need to guard against the buyer's deviations towards any \emph{single} other allocation. This allows the seller to utilize mechanisms where the buyer can only purchase a single mechanism entry. These \emph{buy-one} mechanisms can be heavily tailored to the buyer's distribution, often offering comparable allocations for widely different prices. Consider the following  example.  


\begin{example}
\label{ex:1} A buyer walks into a coffee shop. They are equally likely to have one of three valuations over a cup of coffee and a bagel: either the buyer has value \$2 for the cup of coffee and \$0 for the bagel, \$0 for the cup of coffee and \$4 for the bagel, or \$4 for the cup of coffee and \$6 for the bagel (and \$10 for the combination of a cup of coffee and a bagel). The optimal mechanism in this example is as follows: the seller will offer the cup of coffee at \$2, the bagel at \$4 and the combination of a cup of coffee and a bagel at \$8. In this example, the optimal mechanism is buy-one incentive-compatible. The buyer with non-zero valuations for both items (weakly) prefers buying the combination at \$8 to buying exactly one of the items separately. The mechanism, however, is not buy-many incentive-compatible: when the buyer has non-zero value for both items, they would prefer to visit the coffee shop twice and buy the items separately for a combined price of \$6. This achieves the same allocation at a cheaper price. 
\end{example}

The example above highlights two problems with the classical buy-one model. The first is that no high-valued customer would pay \$8 for the combination of coffee and a bagel. They would buy one item, queue in line again, and buy the other. This in turn creates the second problem: the revenue of the optimal buy-one mechanism overshoots the real-world revenue the seller would experience. The buy-one mechanism would net the seller an expected revenue of $\$(2+4+8)/3 = \$4\frac{2}{3}$. In reality, because no buyer would pay $\$8$ for the combination of items, the seller would experience expected revenue $\$(2+4+6)/3 = \$4$. 

While this example is simple, the pathological constructions of~\cite{HartN17, PSW22} do significantly wilder things such as offering similar randomized allocations for astronomically different prices. For instance, a buy-one mechanism may offer randomized allocation $\vec{q}$ for price $p$, and randomized allocation $\vec{q}+\vec{\varepsilon}$ for price $4p$. Just like in Example~\ref{ex:1}, a buyer would prefer to buy the cheaper option two, three or even four times rather than the carefully tailored, more expensive option. However, if they are forced to buy exactly one option, these pathological constructions ensure high-valued buyers will marginally prefer buying one copy of the expensive item to buying one copy of the cheaper one. Thus part of the reason why positive results in multi-item auctions (especially for correlated items) are rare is because the ``optimal" buy-one mechanism is unrealistic and not implementable in a world were buyers may, reasonably, interact with the seller multiple times. It is this lack of consideration on the seller's choice of mechanism that allows for ``infinite" revenue auctions. 


One natural way to overcome this problem is to allow the buyer to purchase multiple menu entries. \emph{Buy-many} mechanisms, introduced more than ten years ago in~\cite{BriestCKW10, BriestCKW15}, are mechanisms where the buyer may purchase \emph{any} multi-set of menu entries, including sets of unbounded size. This significantly restricts the seller's choice of mechanism: buy-many mechanisms are always buy-one incentive compatible but the converse is not true. A simple way to see this is that the prices in (deterministic) buy-many mechanisms are always sub-additive, meaning that for any two disjoint sets of items $S, T$, $p(S)+p(T) \geq p(S\cup T)$. Buy-one mechanisms, like that of Example~\ref{ex:1}, need not satisfy this property, making them less appealing for real-world applications. 

The work of~\cite{BriestCKW10, BriestCKW15} already exhibits how buy-many mechanisms overcome the revenue gap problem: they showed that a popular benchmark, known as item-pricing, could recover an $O(\log n)$ factor of the revenue attained by the optimal buy-many mechanism for the case of a single, unit-demand buyer. This was later extended to arbitrary valuations by~\cite{CTT19}, while preserving the approximation factor. Key to these results is that by sufficiently restricting the seller's choice of mechanisms, the optimal revenue drops from \emph{unbounded} in the buy-one case to \emph{finite} in the buy-many case, allowing for simple mechanisms like item-pricing to approximate the optimal buy-many revenue. 

One question left unaddressed by these works is \emph{how much} we need to restrict the seller's choice of mechanisms so that the optimal revenue is finite. While in Example~\ref{ex:1} it was reasonable to assume the buyer would purchase a single item, re-queue and purchase the other item, it would not be reasonable to assume the buyer would be willing to re-queue \emph{any} number of times. At some point, the buyer will get tired. This means the buyer's threat of interacting with the mechanism any number of times is not a credible threat to the seller. Alternatively, we can think of buy-many mechanisms as giving \emph{too} much power to the buyer just like buy-one mechanisms give \emph{too} much power to the seller. 

In order to answer the question outlined we need a more fine-grained family of mechanisms that smoothly interpolates between buy-one and buy-many mechanisms. For this purpose we introduce buy-$k$ mechanisms, a parametric family of mechanisms where the buyer is allowed to purchase any multi-set of at most $k$ menu entries non-adaptively.\footnote{In other words, the buyer first chooses any multi-set of up to $k$ menu options and only after they commit any randomized allocations are decided. Our results will hold even if the buyer is allowed to adaptively choose the menu entries. See Appendix~\ref{sec:adapt} for a more detailed discussion.} We say a mechanism is buy-$k$ incentive-compatible if the buyer always prefers to buy a single menu entry rather than any multi-set of up to $k$ menu entries. Let $\mathcal{B}_{k}(\distD)$ be the set of buy-$k$ incentive-compatible mechanisms for a distribution $\distD$ over $n$ items, and let $\bkrev{\distD}{k} = \max_{\mech \in \mathcal{B}_k(\distD)} \rev{\distD, \mech}$ be the optimal revenue attainable by a buy-$k$ incentive-compatible mechanism. A simple observation, stated below and whose proof we defer to later in the paper, shows that as $k$ increases, the revenue of the seller weakly decreases. 
 
\begin{obs}
\label{cl:bmrev}
\begin{equation*}
\bkrev{\distD}{1} \geq \bkrev{\distD}{2} \geq \dots \geq \bmrev{\distD} \geq \brev{\distD}.
\end{equation*}
\end{obs}

We first show that $k$, the number of times the buyer might interact with the mechanism, can play a role in the seller's optimal revenue. We prove there exists a support-size $3$, correlated distribution $\distD$ over two items for which $\bkrev{\distD}{1} > \bkrev{\distD}{2} > \bkrev{\distD}{3} > \bkrev{\distD}{4}$. This example proves a strict separation between the class of buy-one and buy-$2$ mechanisms, and by Observation~\ref{cl:bmrev}, between the class of buy-$2$ mechanisms and buy-many mechanisms. This reinforces the idea that buy-$k$ mechanisms interpolate between buy-one and buy-many mechanisms, and thus merit a study of their own. 

\begin{restatable}{proposition}{propSep}
\label{prop:separation}

There exists a distribution $\distD$ over two items such that 

$$\bkrev{\distD}{1} > \bkrev{\distD}{2} > \bkrev{\distD}{3} > \bkrev{\distD}{4}. $$
\end{restatable}

We conjecture in fact that for the distribution from Proposition~\ref{prop:separation}, the seller's revenue strictly decreases as $k$ increases. In other words, we conjecture there exists a simple distribution that can witness separations between the classes of buy-$k$ and buy-$(k+1)$ mechanisms for all $k \geq 1$. We discuss this conjecture in Appendix~\ref{app:conj}. 

\begin{conjecture}
\label{conj}
There exists a distribution $\distD$ over two items such that for all $k \geq 1$, 
\[
\bkrev{\distD}{k} > \bkrev{\distD}{k+1}.
\]
\end{conjecture}

After proving that the class of buy-$k$ mechanisms is distinct from the previously studied classes of buy-one and buy-many mechanisms, the next natural question is to understand their approximation guarantees with respect to simple mechanisms. Our measure of simplicity for a mechanism $\mech$ will be its \emph{menu complexity} or the number of menu entries $|\mech|$ the mechanism offers. Under this lens, broadly speaking, we think of ``simple" mechanisms as those that have polynomial menu complexity and ``complex" mechanisms as those that have super-polynomial menu complexity. For example, any mechanism which only offers the grand bundle of all items for a fixed price has menu complexity $1$. This family of mechanisms is so important that the revenue of the optimal grand bundling mechanism, $\brev{\cdot}$ (henceforth bundling), is often a benchmark of interest.\footnote{Bundling is arguably one of the simplest mechanisms.} Therefore, our main question of interest is the following. 

\begin{question}
\label{oq:1}
Given integers $n$, $k$, when does $f(n, k) \cdot \brev{\distD} \geq  \bkrev{\distD}{k}$ hold for all distributions $\distD$ over $n$ items, for some function $f(n,k)$? 
\end{question}


\subsection{Our Contributions}

Our main result shows that restricting the seller to the class of buy-$n$ incentive-compatible mechanisms suffices to get around pathological constructions for two or more items (like e.g.,~\cite{HartN17,PSW22}). These works show that there are distributions over just two items for which no ``simple" mechanism could approximate the revenue of the optimal buy-one mechanism, or in the language of Open Question~\ref{oq:1}, that no such function $f(n,1)$ exists for $n \geq 2$. We show that when facing a single, additive buyer, the revenue from optimally pricing the bundle of items, $\brev{\distD}$, recovers a polynomial fraction of the optimal buy-$n$ revenue.




\begin{restatable}{theorem}{thmMainN}
\label{thm:MainN}

For any distribution $\distD$ over $n$ items for a single, additive buyer, it holds that 

$$O(n^2) \cdot \brev{\distD} \geq \bkrev{\distD}{n}.$$
\end{restatable}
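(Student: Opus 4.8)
The plan is to fix a revenue-optimal buy-$n$ mechanism $\mech$, write $(\vec{x}(v), p(v))$ for the menu entry that type $v$ purchases, so that $\bkrev{\distD}{n} = \mathbb{E}_{v\sim\distD}[p(v)]$, and upper bound $\mathbb{E}_v[p(v)]$ by $O(n^3)\cdot\brev{\distD}$. The first --- and only easy --- ingredient is that individual rationality forces $p(v) \le \sum_i v_i x_i(v) \le \sum_i v_i$ pointwise, hence $\Pr_v[p(v)\ge t] \le \Pr_v[\sum_i v_i\ge t]$ for every $t$ and therefore $\brev{\distD}\ge \sup_{t>0}\,t\cdot\Pr_v[p(v)\ge t]$. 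Consequently, if the law of $p(v)$ put all but a negligible fraction of its mass on an interval of prices spanning only $O(n^3)$ dyadic scales, a routine bucketing of $\mathbb{E}_v[p(v)]$ would already yield the theorem. The entire difficulty is that in the buy-one world these prices can span unboundedly many scales --- this is precisely the mechanism behind the infinite gap of \cite{HartN17} --- so the heart of the proof is to show that buy-$n$ incentive compatibility collapses this range to $\mathrm{poly}(n)$ scales.

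To carry this out I would run the classical core--tail decomposition of a single additive buyer's revenue, but replace its use of item pricing in the tail by an argument that uses buy-$n$ IC. Pick a scale $r = \Theta(\brev{\distD})$, call item $i$ \emph{large} for type $v$ if $v_i > r$, and split $\mathbb{E}_v[p(v)]$ into the contribution of \emph{core} types (no large item) and, for each item $i$, the contribution of types whose largest coordinate is $i$. For the core, the truncated grand bundle $\sum_i \min(v_i,r)$ has variance at most $r$ times its mean, so a Chebyshev/Markov argument shows a single bundle price recovers a constant fraction of $\mathbb{E}[\sum_i\min(v_i,r)]$; since the core price is at most $\sum_i v_i \le \sum_i\min(v_i,r)$ on core types, the core contributes $O(1)\cdot\brev{\distD}$. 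For the tail piece of item $i$, the classical bound would invoke the single-item (Myerson) revenue $\srev{\distD}$ --- useless here, since $\srev{\distD}$ may be infinite on exactly these instances --- so instead I would use the new lever available to a buy-$n$ buyer: purchasing $n$ copies of a single entry $(\vec{x}, p)$ amplifies each coordinate to $1-(1-x_i)^n \ge \tfrac12\min(1, n x_i)$ at price $np$. Comparing the entry bought by a high-paying tail-of-$i$ type against the (amplified) entry bought by a much cheaper tail-of-$i$ type, buy-$n$ IC forces the ratio of their prices to be $\mathrm{poly}(n)$; hence within each tail piece the prices again live on $\mathrm{poly}(n)$ dyadic scales and the bucketing bound gives $O(n^2)\cdot\brev{\distD}$ per item. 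Summing the core term and the $n$ tail terms yields $\mathbb{E}_v[p(v)] = O(n^3)\cdot\brev{\distD}$. This is also the sense in which \ref{thm:MainN} extends \ref{thm:MainTwo}: the two-item bound is essentially the tail analysis for a single item together with ``the rest'', and the general argument repeats that reasoning $n$ times.

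The main obstacle is the tail analysis --- showing that buy-$n$ IC really does rule out the buy-one pathology on the tail, i.e.\ that the mechanism restricted to types with a single large item $i$ cannot imitate the two-item infinite-gap construction. The subtlety is that the $n$-copy (and, more generally, $n$-entry) deviation is the \emph{only} extra constraint over buy-one, and making it bite requires carefully choosing the cheap ``reference'' entry and then bounding the value that its amplified allocation $\bigl(1-(1-x_i)^n\bigr)_i$ actually delivers to a high type against the $n$-fold price across all coordinates at once; the amplification only guarantees each coordinate reaches $\tfrac12\min(1,nx_i)$, so coordinates that the reference entry allocates at probability much below $1/n$ remain essentially worthless relative to the high type's own allocation and must be argued away separately. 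A secondary, routine subtlety is that the scale $r$ and the core concentration bound must be simultaneously compatible with all $n$ tail decompositions, which is exactly what costs the extra factor of $n$ beyond the two-item case and nothing more.
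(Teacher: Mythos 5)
Your proposal takes a genuinely different route from the paper (core--tail decomposition instead of the paper's menu-gap machinery), but it has two gaps, one of which is fatal for the generality the theorem requires.

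The fatal gap is the core concentration step. Your claim that ``the truncated grand bundle $\sum_i \min(v_i,r)$ has variance at most $r$ times its mean'' is only true when the $v_i$ are independent; it is precisely the variance bound behind the Li--Yao/BILW core--tail analysis for \emph{product} distributions. The theorem here is stated for \emph{arbitrary correlated} $\distD$, and indeed correlation is essential: all the Hart--Nisan-style infinite-gap instances are correlated, and for such $\distD$ the truncated sum has no useful concentration (e.g.\ take $v_1=\dots=v_n$ perfectly correlated, in which case $\mathrm{Var}(\sum_i\min(v_i,r))$ is a factor $n$ too large and the Chebyshev step gives nothing). The paper never attempts a per-type concentration bound at all; its Claim~\ref{cl:bundleprobn} uses only the pointwise inequality $\Pr[\vec{x}\in\mathcal{B}_j]\le (1+\delta)\brev{\distD}/|\vec{x}_j|_1$, which is the same IR observation you make in your first paragraph and is the \emph{only} distributional fact the whole argument needs. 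The second gap is the tail step, which you yourself flag as the main obstacle but leave unresolved: comparing a high type against $n$ \emph{copies} of one cheap reference entry only amplifies coordinates where that single entry already allocates $\Omega(1/n)$, and there is no reason the cheap entry's allocation is usefully aligned with the expensive entry's. The paper avoids this by comparing the $i$-th allocation $\vec{q}_i$ against $n$ \emph{different} earlier allocations, chosen so that their lottery coordinate-wise dominates the running maximum $\vec{m}_{i-1}$ (Claim~\ref{cl:maxgapn}); the sum of the resulting gaps then telescopes to at most $1$ per coordinate (Lemma~\ref{lem:UBnn}), which is the actual source of the $O(n)$ in the menu-gap bound. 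So the deviation that drives the paper's proof is structurally different from the one you propose, and the cumulative (telescoping) bound replaces any attempt to directly control the number of price scales.

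On the positive side, your bucketing intuition --- that it suffices to show the prices effectively live on $\mathrm{poly}(n)$ geometric scales, because each scale contributes at most $O(\brev{\distD})$ by the IR observation --- does match the paper's Lemma~\ref{lem:LBnk}, where the menu is binned by price into scales of ratio $(k+1)$ and one representative is taken per bin. The novelty in the paper is not the bucketing, but the realization that the \emph{menu gap} is the right aggregate quantity: it couples the revenue per bin to the incremental allocation mass via the buy-$k$ IC inequality (Claim~\ref{cl:gapn}), and then Lemma~\ref{lem:UBnn} controls the aggregate without ever needing to say anything per-item or per-scale. If you want to salvage a core--tail-style argument, the place to look is to replace the variance bound with something that holds for correlated distributions, but at that point you are likely to reinvent the paper's single-scale bound $\Pr[\vec{x}\in\mathcal{B}_j]\cdot |\vec{x}_j|_1 \le O(\brev{\distD})$, which makes the core/tail split unnecessary.
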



The proof of~\ref{thm:MainN} relies on the identification of a measure, $\MenuGapTwo{k}{\cdot}{\cdot}$, whose formal definition we defer to Section~\ref{sec:Notation}. This quantity is the generalization to buy-$k$ mechanisms of $\MenuGapTwo{}{\cdot}{\cdot}$ introduced by previous work for buy-one mechanisms (see~\cite{HartN17, PSW22}). In those works, $\MenuGapTwo{}{\cdot}{\cdot}$ was used to construct distributions whose optimal revenue was hard to approximate. In contrast, our work is the first to show that this framework can be used to prove approximation guarantees instead. In fact, our techniques can be extended to also show similar results for the case of arbitrary monotone valuation functions,~\footnote{A valuation $x$ function is monotone if $x(S) \geq x(T)$ whenever $T \subseteq S$. In other words, getting more items never decreases the buyer's value.} (albeit with a significant loss in the approximation factor). 

\begin{restatable}{theorem}{thmGeneralVal}
\label{thm:GeneralVal}

For any distribution $\distD$ over $n$ items, for a single buyer with a monotone valuation, it holds that 

$$O(n^2 \cdot 2^n) \cdot \brev{\distD} \geq \bkrev{\distD}{n}.$$
\end{restatable}

We consider Theorem~\ref{thm:GeneralVal} as a result validating the robustness of the framework we introduce as a proof technique for approximation algorithms for revenue maximization. The experienced reader will recall that, historically, approximation algorithms for different valuations classes used tools specific to the valuations themselves (e.g., ~\cite{ChawlaHK07, ChawlaHMS10} for unit demand, \cite{Yao15, HartN12, BabaioffILW14} for additive valuations and so on). It wasn't until the work of~\cite{CaiDW16} that a unifying framework was developed to reprove (or even improve) such results. Thus, we interpret Theorem~\ref{thm:GeneralVal} additionally as proof that the framework we develop is robust enough to handle general valuation classes and are optimistic that results similar to Theorem~\ref{thm:MainN} can be proved via our framework for other valuation classes. 

The first piece of the proofs for Theorems~\ref{thm:MainN},~\ref{thm:GeneralVal} is identical. We show that there exists an appropriate choice of inputs $(X,Q)$ such that $\MenuGapTwo{k}{X}{Q}$ upper bounds the ratio between the optimal buy-$k$ revenue and the revenue achieved by bundling, up to some $O(k)$ factor. We again defer a technical definition of $\MenuGapTwo{k}{\cdot}{\cdot}$ until later. For the purposes of this exposition, it suffices to say that given two sequences of vectors $(X, Q)$, $\MenuGapTwo{k}{X}{Q}$ captures some geometric property of the input pairs of sequences. Thus, more precisely, in the first step of the proof we show that for any distribution $\mathcal{D}$ and any buy-$k$ incentive-compatible mechanism $\mech$ for $\distD$, there exists a cleverly chosen set of valuations $X$ in the support of the distribution $\distD$ together with their corresponding allocations $Q$ under $\mech$ whose ``geometric property" witnesses an upper bound to the  revenue revenue that $\mech$ achieves on $\distD$, up to a factor of $O(k)$. In particular, this is also true about the revenue-optimal buy-$k$ incentive-compatible mechanism. 

The second step of the proof upper bounds $\MenuGapTwo{n}{X}{Q}$ itself by $n$ for the case of an additive buyer (resp. by $n \cdot 2^n$ for the case of a monotone buyer) for \emph{all} input pairs $(X,Q)$. It is worth noting that for the case of an additive buyer, this second step is tight. This implies that our approach of bounding revenue gaps via $\MenuGapTwo{k}{X}{Q}$ cannot give a sublinear approximation. However, this is not a fault of our techniques. In Appendix~\ref{app:polyn} we provide a simple proof that there exist distributions $\distD$ for which $\brev{\distD} \leq \bkrev{\distD}{k}/O(n)$ for any $k$. In other words, $\brev{\distD}$ can not give a sublinear (in the number of items $n$) approximation to $\bkrev{\distD}{k}$, for any $k$. 


There are two subtle implications of these results. The first is that for all $n$-dimensional distributions $\distD$, $\bkrev{\distD}{n}$ is \emph{finite} whenever $\brev{\distD}$ is finite. This stands in contrast to the buy-one case where even for just $n=2$ items, there exist $\distD$ such that $\bkrev{\distD}{1} > \infty$ but $\brev{\distD} = O(1)$. The second is that  since $\bkrev{\distD}{k} \geq \bkrev{\distD}{k'}$ whenever $k < k'$ (due to Observation~\ref{cl:bmrev}), then Theorems ~\ref{thm:MainN},~\ref{thm:GeneralVal} in fact also answer Question~\ref{oq:1} for the case when $n \leq k$. 

The next goal is to answer Question~\ref{oq:1} for the case when $1 < k < n$. We make progress by proving a strong lower bound for the case $k \leq n^{1/2-\varepsilon}$. We show that there exist distributions for which there is an exponential revenue gap when $k \leq n^{1/2-\varepsilon}$. This is captured by Theorem~\ref{thm:mainLB}.


\begin{restatable}{theorem}{thmMainLB}
\label{thm:mainLB}
If $k \leq n^{1/2-\varepsilon}$ for some $\varepsilon > 0$, then there exists an additive valuation function $\distD$ over $n$ items such that for a single buyer

$$ \frac{\bkrev{\distD}{k}}{\brev{\distD}} \geq \frac{ \exp \left(\Omega{(n^{\varepsilon})}\right)}{2n^2}. $$ 
\end{restatable}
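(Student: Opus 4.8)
The plan is to construct a distribution $\distD$ over $n$ items together with an explicit deterministic buy-$k$ mechanism $\mech$ that extracts exponential (in $n^\varepsilon$) revenue, while showing that bundling — indeed any mechanism with few menu entries — cannot. The natural building block is a family of item-subsets $\{S_1, \dots, S_m\}$, each of size roughly $t = \Theta(n^\varepsilon)$, drawn from a combinatorial design: a $k$-cover-free family, meaning no set $S_i$ is contained in the union of any $k$ of the others. With $n$ items, such families exist with $m = \exp(\Omega(n^\varepsilon))$ sets when $k \le n^{1/2-\varepsilon}$ (this is exactly the regime where cover-free sets of the right size are plentiful; the $n^{1/2}$ threshold matches the standard bound that a $k$-cover-free family on a ground set of size $n$ can have at most roughly $n^{k+1}$-ish sets of size $\Theta(n/k^2)$, so one has to be a little careful that the parameters $t$, $m$, $k$ are chosen consistently). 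The buyer's value distribution will be uniform over the $m$ "types," where type $j$ values the items in $S_j$ at $1$ each (additively) and all other items at $0$, possibly after a mild rescaling so that the bundle price is forced to be small.

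The mechanism $\mech$ offers, for each $j$, the deterministic allocation "the set $S_j$" at a carefully chosen price just below $|S_j| = t$ (say $t - 1$ or $t(1-\delta)$). Buy-$k$ incentive compatibility is where the cover-free property does the work: if buyer of type $j$ tries to combine up to $k$ other menu options $S_{i_1}, \dots, S_{i_k}$, the union $\bigcup_\ell S_{i_\ell}$ misses at least one item of $S_j$ (cover-freeness), so the buyer's realized value from that combination is at most $t-1$ while paying at least $k$ times a price close to $t$ — hugely worse than buying option $j$ alone. We also need that buyer $j$ does not prefer some single option $S_i$ with $i \ne j$: since all options are priced nearly at $t$ and $|S_i \cap S_j| \le t-1$ by cover-freeness (in particular $S_i \ne S_j$), option $j$ is strictly best. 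Thus $\mech$ is buy-$k$-IC and collects revenue $\approx t - 1 = \Theta(n^\varepsilon)$ from every type — wait, that is only polynomial; to get the exponential gap we instead make the \emph{prices} grow: assign type $j$ a value scale $v_j$ and price $p_j \approx v_j t$, choosing $v_j$ geometrically or choosing one heavy type, so that $\bkrev{\distD}{k}$ is dominated by a type whose willingness-to-pay for its tailored bundle is exponentially larger than what bundling can charge. The cleanest route is the standard "equal-revenue-style" trick: let the $m$ types have geometrically decreasing probabilities and geometrically increasing values so that $\mech$ earns the same from each and $\bkrev{\distD}{k} = \Theta(m) \cdot (\text{base}) = \exp(\Omega(n^\varepsilon))$, whereas bundling — which must post one price $p$ — can only sell to the types with $v_j t \ge p$, and the revenue-from-bundling bound collapses because of the geometric tradeoff. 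The $\frac{1}{2n^2}$ slack in the theorem statement absorbs the loss from the mild rescalings and from comparing against $\brev{\distD}$ rather than item-pricing.

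To lower bound $\bkrev{\distD}{k}$ it suffices to exhibit the one mechanism above and check its buy-$k$-IC constraints — that is the content of the cover-free argument and is essentially a counting inequality. To upper bound $\brev{\distD}$ I would argue directly: a bundle price $p$ sells exactly to types $j$ with (value of $S_j$) $\ge p$; since the types are designed so that the product (probability mass above threshold)$\times$(threshold) is bounded by the base revenue unit for every choice of $p$, bundling cannot beat $O(\text{poly}(n))$ times that unit, while $\mech$ beats it by the factor $m = \exp(\Omega(n^\varepsilon))$. Finally, since $\mech$ is deterministic and the cover-free obstruction to deviations does not care whether the buyer picks menu options adaptively or in a batch, the same bound holds against adaptive buyers, as claimed in the abstract.

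\textbf{Main obstacle.} The technical heart is the simultaneous parameter balancing: I need a $k$-cover-free family on $n$ items whose sets have size $t = \Theta(n^\varepsilon)$ \emph{and} cardinality $m = \exp(\Omega(n^\varepsilon))$, and these three demands are in tension — cover-freeness with parameter $k$ caps $m$ in terms of $n$ and $t$, and it is precisely the hypothesis $k \le n^{1/2-\varepsilon}$ that makes a consistent choice possible (roughly, one wants $t \asymp n/k^2$ and then $m$ can be taken exponential in $t$). Getting the constants and the exact form of the design right — likely via a probabilistic/greedy construction or a known explicit construction of superimposed codes — is where the real care is needed; everything downstream (IC verification, bundling upper bound, the adaptive extension) is then a short calculation.
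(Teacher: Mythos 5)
Your proposal follows essentially the same blueprint as the paper: take a $k$-cover-free family (Kautz--Singleton gives size $\exp(\Omega(n/k^2))$), use the sets' indicator vectors as both valuations and deterministic allocations, scale the valuations geometrically (the paper uses $C_i = (n+1)^{2i}$ with mass $\propto 1/C_i$) so that each type contributes equally to revenue, use cover-freeness to verify buy-$k$ incentive compatibility, and bound bundling by the geometric tradeoff; the paper packages the middle steps through its $\mathrm{MenuGap}$ machinery, but that is just bookkeeping for the same construction. Your stated ``main obstacle'' is actually not an obstacle: there is no need to control the set sizes $t$ at all — the per-type contribution is bounded below by $1/n$ simply because every set has size at most $n$ and cover-freeness guarantees at least one uncovered item, so one only needs $m = \exp(\Omega(n/k^2))$, which the hypothesis $k \le n^{1/2-\varepsilon}$ gives directly.
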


The experienced reader will observe that Theorem~\ref{thm:mainLB} says something strong about the revenue guarantees that simple mechanisms can obtain. Due to a Corollary from~\cite{HartN17} which says that bundling always recovers a $1/|\mech|$ fraction of the revenue of any mechanism $\mech$, the revenue of any mechanism $\mech$ of size $\textsc{poly}(n)$ cannot exceed $\textsc{poly}(n) \cdot \brev{\distD}$. Thus, Theorem~\ref{thm:mainLB} implies that for the instance that witnesses its proof, no mechanism of size polynomial in the number of items can obtain a sub-exponential approximation. The proof of Theorem~\ref{thm:mainLB} will, unsurprisingly, borrow ideas from~\cite{HartN17, BriestCKW15}. Interestingly the buy-$k$ mechanism used in the lower bound instance will be \emph{deterministic}, in part because the construction of the instance itself makes use of discrete combinatorial objects known as cover-free sets. This implies that the lower bounds hold even for the optimal deterministic buy-$k$ mechanism, a much weaker benchmark for comparison. To put this in context,~\cite{HartN13} showed that there exist distributions $\distD$ for which $\brev{\distD} \leq \frac{2^n}{n} \detrev{\distD}$, where $\detrev{\distD}$ is the revenue of the optimal buy-one deterministic mechanism. Thus, Theorem~\ref{thm:mainLB} can be seen as an extension of this result, proving that even weakening the seller's benchmark to the optimal buy-$k$ deterministic mechanism (for $k \leq n^{1/2-\varepsilon}$) does not significantly improve the worst-case approximation with respect to bundling. 

While our model and results are written for the case of a non-adaptive buyer, a simple argument will allow us to translate both our upper bounds (Theorems~\ref{thm:MainN},~\ref{thm:GeneralVal}) and our lower bounds (Theorem~\ref{thm:mainLB}) to the case of adaptive buyers. We defer this discussion to Appendix~\ref{sec:adapt}. 

\textbf{Our Techniques.} The main conceptual contribution of this work is the introduction of a new class of mechanisms, buy-$k$ mechanisms, which interpolate between buy-one and buy-many mechanisms. The main technical contribution of our work is a novel framework for proving approximation results for multi-item mechanism design under arbitrary distributions. We generalize measures meant for the buy-one setting from ~\cite{HartN17, PSW22} to the buy-$k$ setting. Similar to~\cite{PSW22}, we prove that this measure upper bounds the revenue gap between the revenue-optimal mechanism (in some class of mechanisms) and bundling. Unlike~\cite{PSW22}, we are able to show a \emph{finite} upper bound for this measure in the case of buy-$n$ mechanisms, yielding a finite approximation result.

\subsection{Related Work}

Buy-many mechanisms have been proposed more than ten years ago, as early as~\cite{BriestCKW10, BriestCKW15}. Results from a recent line of work~\cite{CTT19, CTT20, CTT20b, RTT22} make the case to further the study of buy-many mechanisms. For instance,~\cite{CTT20} show that buy-many mechanisms satisfy some form of revenue monotonicity, an intuitive property that does not hold in the case of buy-one mechanisms~\cite{HartR15, PSW19}. In addition, as mentioned earlier in the introduction,~\cite{CTT19} show that item-pricing recovers a $O(\log n)$ factor of the optimal buy-many revenue. Combining a Corollary from~\cite{HartN17} with the main result of~\cite{CTT19} shows that bundling recovers at least a $O(n \log n)$ fraction of the optimal buy-many revenue. Our results have a worse approximation factor because the benchmark is stronger (and this is proved formally in Observation~\ref{cl:bmrev}). Thus our work deepens the study of buy-many mechanisms by introducing more fine-grained classes of mechanisms. We believe our results strengthen the case for the study of not only buy-many mechanisms, but also fine-grained buy-many mechanisms. 

The work of~\cite{CTT19} also gave strong lower bounds for the \emph{description complexity}, a measure that lower bounds the menu complexity of a mechanism. In particular, they showed that no mechanism with sub-exponential description complexity could get an $o(\log n)$ approximation to the optimal buy-many revenue, even for additive buyers. In follow up work,~\cite{CTT20} extended the lower bound to the larger class of fractionally sub-additive (or XOS) valuations.

A prolific line of work assumes that the underlying distribution of values $\distD$ is a product distribution. Under this assumption, it is known that mechanisms with low menu complexity can achieve constant-factor approximations to the optimal revenue for sub-additive valuations (see e.g.,~\cite{LiY13, HartN12, Yao15,ChawlaHK07, ChawlaMS15, BabaioffILW14, CaiZ17, RubinsteinW15, ChawlaHMS10, ChawlaM16, BabaioffGN17}, among others), effectively circumventing the pathological constructions of~\cite{HartN17}. Some recent results even show strong positive results for arbitrary approximation schemes. For instance,~\cite{BabaioffGN17} show that for any product distribution $\mathcal{D}$, there exists a mechanism with finite menu complexity that recovers a $(1-\varepsilon)$ approximation to the optimal revenue when selling to an additive buyer. More recently,~\cite{KothariMSSW19} give a quasi-polynomial approximation scheme for revenue maximization for a single, unit-demand buyer interested in $n$ independent items. Notwithstanding the significant contributions of these works, the question of revenue-maximization under arbitrary distributions remained unaddressed. 

Finally, work of~\cite{PSW19} provides yet another way to circumvent the pathological constructions of~\cite{HartN17}. In their work, the authors borrow ideas from the celebrated smoothed-analysis framework and initiate the study of beyond worst-case revenue maximization. Their results show that, under some smoothing models, simple mechanisms can approximate optimal ones. 

\textbf{Organization.} In Section~\ref{sec:Notation}, we present formal definitions for the objects of our interest as well as for the relevant benchmarks we use. Section~\ref{sec:main} contains the proofs of our upper bounds in Theorems~\ref{thm:MainN},~\ref{thm:GeneralVal}. Section~\ref{sec:few} contains the proof of our lower bound in Theorem~\ref{thm:mainLB}. We conclude in Section~\ref{sec:conclusion} and outline questions for future work. Appendix~\ref{sec:adapt} contains the discussion for the case of adaptive buyers. Appendix~\ref{app:polyn} proves that Theorem~\ref{thm:MainN} can not be improved to provide a sub-linear approximation. Appendix~\ref{app:conj} presents the proof of Proposition~\ref{prop:separation} and presents a simple candidate distribution to prove Conjecture~\ref{conj}.

\section{Preliminaries and Notation}
\label{sec:Notation}

We consider the case of a single buyer interested in $n$ items from a single revenue-maximizing seller. We assume the buyer is utility-maximizing and risk-neutral. The buyer draws their valuation $x$ from a known, possibly correlated $n$-dimensional distribution $\distD$, with support set $\mathcal{X} = \textsc{supp}({\distD})$ and probability density function $f_{\distD}(\cdot)$ (and the index may be ommitted when clear from context). For all our results, we will assume that the buyer's valuation over the items is monotone, i.e., whenever $S\subseteq T$, ${x}(S) \leq {x}(T)$. For our main result, Theorem~\ref{thm:MainN}, we will additionally assume that the buyer is additive across the items, i.e., for any subset of items $S \subseteq [n]$, ${x}(S) = \sum_{i\in S} x_i$. Given a possibly randomized allocation of items $\vec{q} \in [0,1]^n$, we use ${x}(\vec{q})$ to denote the buyer's expected value for the realized set of items.  For an additive buyer, ${x}(\vec{q}) = \sum_{i=1}^{n} x_i \cdot q_i$. For a monotone buyer, $x(\vec{q}) = \sum_{S \subseteq [n]} x(S) \cdot \Pr(\vec{q},S)$, where $\Pr(\vec{q}, S) = \prod_{j \in S} \vec{q}_j \prod_{j \not \in S} (1-\vec{q}_j)$ is the probability that the buyer gets exactly the items in set $S$ from randomized allocation $\vec{q}$. Let $\Lambda = \{\vec{q}_1, \vec{q}_2, \dots, \vec{q}_k, \dots\}$ be a multi-set of allocations (of possibly unbounded size). We denote by $\lottery(\Lambda) \in [0,1]^n$ (read ``lottery") the expected allocation that results from being allocated every $\vec{q}_i \in \Lambda$ independently and at once, i.e., for all $j \in [n]$, $\lottery(\Lambda)_j = 1 - \prod_{i=1} (1-q_{ij})$.\footnote{The careful reader might wonder what would happen if instead of buying all their menu options at once, the buyer was allowed to do so \emph{adaptively} (as opposed to the model presented here which is non-adaptive). We discuss this in Appendix~\ref{sec:adapt}, but the main takeaway is that our upper bounds also hold for adaptive buyers.}


A mechanism $\mech = (p, q)$ is defined by a pair of functions $p:\mathcal{X} \rightarrow \mathcal{R}_{\geq 0}$, $q: \mathcal{X} \rightarrow [0,1]^{n}$ known as the pricing and allocation functions, respectively. For a fixed integer $k$ we say that a mechanism $\mech$ is \emph{buy}-$k$ incentive-compatible if for every valuation $\vec{x} \in \mathcal{X}$ it is in the buyer's best interest to purchase a single option from the mechanism rather than any combination of up to $k$ menu options. In other words, there exists some $(p,\vec{q}) \in \mech$ such that ${x}(\vec{q}({x})) - p({x}) \geq {x}(\lottery(\Lambda)) - \sum_{i=1} p(\vec{q}_i) $ for \emph{any} possible multi-set of menu options $\Lambda$ of size at most $k$. Thus, setting $k=1$ recovers the standard notion of (buy-one) incentive-compatible mechanisms, and as $k \rightarrow \infty$ it recovers the existing definition of buy-many incentive-compatible mechanisms.





\subsection{Benchmarks of Interest} 

We now formally define some of the benchmarks that will be used throughout this paper. For a given (arbitrarily correlated) distribution $\distD$, let 

\begin{itemize}
    \item $\brev{\distD}$ be the revenue of the mechanism which sells the grand bundle for its optimal price. Namely, $\brev{\distD} = \max_{p} p \cdot \Pr_{\vec{x}\sim \distD}(\sum x_i \geq p)$,
    \item $\rev{\distD}$ be the revenue of the optimal buy-one incentive-compatible mechanism,
    \item $\rev{\distD, \mech}$ be the revenue of mechanism $\mech$ when the buyer is allowed to buy up to $1$ menu entry from $\mech$,
    \item $\bkrev{\distD}{k}$ be the revenue of the optimal buy-$k$ incentive-compatible mechanism, 
    \item $\bkrev{\distD, \mech}{k}$ be the revenue of (not necessarily buy-$k$ incentive-compatible) mechanism $\mech$ when the buyer is allowed to buy up to $k$ menu entries from $\mech$, 
    \item $\bmrev{\distD}$ be the revenue of the optimal buy-many incentive-compatible mechanism. 
\end{itemize}

\begin{proof}[Proof of Claim:\ref{cl:bmrev}]
If a mechanism $\mech$ is buy-$k$ incentive-compatible for some $k$, it is also buy-$k'$ incentive-compatible for all $k' \leq k$. This follows from the fact that the buyer can always buy the empty lottery $k-k'$ times and the mechanism must guard against such deviations. Thus, the best buy-$k$ mechanism can perform no better than the best buy-$(k-1)$ mechanism, proving the claim. The last inequality follows from the fact that bundling is a buy-many incentive compatible mechanism.  
\end{proof}

\subsection{Menu Gaps: An Intermediary Measure}

We now present the definition of $\gaptwo{k}{i}{X}{Q}$ and $\MenuGapTwo{k}{X}{Q}$, the quantities that will serve as intermediaries in proving Theorems~\ref{thm:MainN},~\ref{thm:GeneralVal}. 

\begin{definition}
\label{def:main}
Let $X = \{{x}_i\}_{i=1}^N$ be a sequence of monotone valuation functions and $Q = \{\vec{q}_i\}_{i=0}^N \in [0,1]^n$ be a sequence of vectors with $\vec{q}_0 = \vec{0}^n.$ Then 

\begin{equation}
\label{def:gengap}
\gaptwo{k}{i}{X}{Q} = \min_{j_1, j_2, \dots, j_k < i} {x}_i(\vec{q}_i) - {x}_i(\lottery(\vec{q}_{j_1}, \vec{q}_{j_2}, \dots, \vec{q}_{j_k})).
\end{equation}
Furthermore, define 
\begin{equation}
\label{def:menugap}
\MenuGapTwo{k}{X}{Q} = \sum_{i=1}^N \gaptwo{k}{i}{X}{Q}/{x}_i([n]),  
\end{equation}
where ${x}_i([n])$ is the valuation of a buyer of type ${x}_i$ for the grand bundle of items.

\end{definition}

In particular, if $X=\{x_i\}_{i=1}^{N}$ are additive valuations, then they can be represented as vectors $X=\{\vec{x}_i\}_{i=1}^{N}\in\mathbb{R}^n_{\ge0}$. One can check that the following definition is a special case of Definition \ref{def:main}.

\begin{definition}
\label{def:main_additive}
Let $X = \{\vec{x}_i\}_{i=1}^N \in \mathbb{R}^{n}_{\geq 0}, Q = \{\vec{q}_i\}_{i=0}^N \in [0,1]^n$ be sequences of vectors with $\vec{q}_0 = \vec{0}^n.$ Then 

\begin{equation}
\label{def:addgap}
\gaptwo{k}{i}{X}{Q} = \min_{j_1, j_2, \dots, j_k < i} \vec{x}_i \cdot (\vec{q}_i - \lottery(\vec{q}_{j_1}, \vec{q}_{j_2}, \dots, \vec{q}_{j_k})).
\end{equation}
Furthermore, define 
\begin{equation}
\label{def:addmenugap}
\MenuGapTwo{k}{X}{Q} = \sum_{i=1}^N \gaptwo{k}{i}{X}{Q}/||\vec{x_i}||_1,  
\end{equation}

\end{definition}

These measures are generalizations of similar notions introduced in~\cite{HartN17} and further developed by~\cite{PSW22}. For the case where $k=1$, we exactly recover these earlier definitions. In Definition~\ref{def:main}, it is useful to think of the first sequence of vectors $X$ as possible valuation vectors and the sequence of vectors $Q$ as possible allocation vectors of a mechanism, with the built-in option of not participating. Thus, one way to interpret Equation~\ref{def:gengap} is to think of $p_i = \gaptwo{i}{k}{X}{Q}$ as the largest price a seller can post on menu entry $(p_i, \vec{q}_i)$ so that a buyer with valuation ${x}_i$ will prefer that single menu entry to any subset of at most $k$ ``previous" options for \emph{free}.

\subsection{Some Useful Properties}

We prove a simple, useful property of the $\lottery(\Lambda)$ function. Namely, that if $\Lambda = \{\vec{q_1}, \vec{q}_2, \dots, \vec{q}_k \}$, then $\lottery(\Lambda)$ dominates the vector consisting of coordinate-wise max entries of the vectors in $\Lambda$. 

\begin{claim}
\label{cl:lotprod}
If $\Lambda = \{\vec{q}_1, \vec{q}_2, \dots, \vec{q}_k\}, \vec{q}_i \in [0,1]^n$ for all $i$, then $\lottery (\Lambda)_j \geq \max_{i \in [k]} \{q_{ij}\}$ for all $j \in [n].$
\end{claim}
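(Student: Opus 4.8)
The plan is to prove \Cref{cl:lotprod} directly from the definition of the $\lottery(\cdot)$ function, which is essentially a statement about independent events. Recall that for $\Lambda = \{\vec{q}_1, \dots, \vec{q}_k\}$ we have $\lottery(\Lambda)_j = 1 - \prod_{i=1}^{k}(1 - q_{ij})$ for each coordinate $j \in [n]$. The natural interpretation is that coordinate $j$ is ``won'' if at least one of $k$ independent trials succeeds, where trial $i$ succeeds with probability $q_{ij}$, so the probability of winning is one minus the probability that all trials fail.

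First I would fix an arbitrary coordinate $j \in [n]$ and an arbitrary index $i^\star \in [k]$, and show $\lottery(\Lambda)_j \geq q_{i^\star j}$; since $i^\star$ is arbitrary this gives $\lottery(\Lambda)_j \geq \max_{i \in [k]} q_{ij}$, and since $j$ is arbitrary this gives the claim. To see the single-index inequality, I would write
\begin{equation*}
1 - \lottery(\Lambda)_j = \prod_{i=1}^{k}(1 - q_{ij}) = (1 - q_{i^\star j}) \cdot \prod_{i \neq i^\star}(1 - q_{ij}).
\end{equation*}
Every factor $(1 - q_{ij})$ lies in $[0,1]$ because each $q_{ij} \in [0,1]$, so the product $\prod_{i \neq i^\star}(1 - q_{ij})$ is at most $1$ (an empty product when $k = 1$, equal to $1$). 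Hence $1 - \lottery(\Lambda)_j \leq 1 - q_{i^\star j}$, which rearranges to $\lottery(\Lambda)_j \geq q_{i^\star j}$, as desired.

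There is essentially no obstacle here: the only thing to be careful about is that all factors are nonnegative, so that dropping factors can only increase the product (this is where $\vec{q}_i \in [0,1]^n$ is used, and it is exactly why the analogous statement would fail without the box constraint). I would also note explicitly that the case $k = 1$ is covered, since then $\lottery(\Lambda)_j = q_{1j} = \max_{i \in [1]} q_{ij}$ with equality, and more generally the inequality is an equality precisely when all but one of the $q_{ij}$ are zero. This finishes the proof.
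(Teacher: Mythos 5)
Your proof is correct and takes essentially the same approach as the paper: both factor out one term $(1 - q_{i^\star j})$ from $\prod_i (1 - q_{ij})$ and use that the remaining product lies in $[0,1]$. Your presentation is marginally cleaner (bounding a leftover factor by $1$ rather than expanding a nonnegative product and rearranging), but the underlying idea and the place where the box constraint $\vec{q}_i \in [0,1]^n$ is used are identical.
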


\begin{proof} Fix $j \in [n]$. Assume wlog $q_{1j} = \max_{i \in k} q_{ij}$. It is easy to see that $(1-q_{1j}) \cdot (1-\prod_{j=2}^{k}(1-q_{ij})) \geq 0$ since each term on the left is non-negative. Expanding it we get that $1-q_{1j}-\prod_{j=1}^{k}(1-q_{ij})\geq 0$. Rewriting gives $\lottery(\Lambda)_j \geq q_{1j}$. 
\end{proof}

We also prove a simple, useful property of the $\MenuGapTwo{k}{X}{Q}$ function. Namely, that it is without loss of generality to remove points whose contributions to the sum are non-positive. 

\begin{claim} 
\label{cl:subseq}
Let $X,Q$ be sequences as defined in Definition~\ref{def:main}, and $X' \subseteq X, Q' \subseteq Q$ be the sub-sequences that result from removing any pair of points $(\vec{x}_i, \vec{q}_i)$ whose $\gaptwo{k}{i}{X}{Q} \leq 0$. Then $$\MenuGapTwo{k}{X}{Q} \leq \MenuGapTwo{k}{X'}{Q'}.$$ 
\end{claim}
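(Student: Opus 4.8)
The plan is to show that deleting a single point with non-positive gap contribution cannot decrease $\MenuGapTwo{k}{\cdot}{\cdot}$, and then iterate. So first I would argue it suffices to handle the removal of one offending pair $(\vec{x}_m, \vec{q}_m)$ with $\gaptwo{k}{m}{X}{Q} \le 0$; the general statement follows by repeatedly applying this single-deletion step (the set of offending indices is finite, and each deletion leaves us with a valid pair of sequences in the sense of Definition~\ref{def:main}, since $\vec{q}_0 = \vec{0}^n$ is never removed). Note also that $\gaptwo{k}{i}{X}{Q}/\norm{\vec{x}_i}_1 \le 0$ for the removed point, so the term we drop from the sum in~\eqref{def:menugap} is itself non-positive; the only thing to check is that the \emph{remaining} terms do not shrink.

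The key observation is that removing $(\vec{x}_m,\vec{q}_m)$ only enlarges each surviving $\gaptwo{k}{i}{X'}{Q'}$. Indeed, for an index $i \ne m$ that survives, $\gaptwo{k}{i}{X}{Q}$ is a minimum of $\vec{x}_i \cdot (\vec{q}_i - \lottery(\vec{q}_{j_1},\dots,\vec{q}_{j_k}))$ over all choices of $j_1,\dots,j_k < i$ drawn from $Q$ (including index $0$), whereas $\gaptwo{k}{i}{X'}{Q'}$ is the same minimum but over the \emph{smaller} collection of choices that avoid index $m$. A minimum over a subset of the options is at least the minimum over the full set, hence $\gaptwo{k}{i}{X'}{Q'} \ge \gaptwo{k}{i}{X}{Q}$. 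Dividing by $\norm{\vec{x}_i}_1 > 0$ preserves the inequality termwise, and summing over the surviving indices $i$ gives
\begin{equation*}
\MenuGapTwo{k}{X'}{Q'} = \sum_{i \text{ survives}} \frac{\gaptwo{k}{i}{X'}{Q'}}{\norm{\vec{x}_i}_1} \ \ge\ \sum_{i \text{ survives}} \frac{\gaptwo{k}{i}{X}{Q}}{\norm{\vec{x}_i}_1} \ \ge\ \sum_{i=1}^{N} \frac{\gaptwo{k}{i}{X}{Q}}{\norm{\vec{x}_i}_1} = \MenuGapTwo{k}{X}{Q},
\end{equation*}
where the last inequality uses that the dropped terms are non-positive. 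This closes the single-deletion step.

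One mild subtlety to be careful about is re-indexing: after deleting a point the constraint ``$j_\ell < i$'' should be read as referring to the original order of the sequence (equivalently, relabel the surviving points $1,\dots,N'$ preserving order), so that the set of admissible index tuples for a surviving point in $(X',Q')$ is genuinely a subset of those in $(X,Q)$ — no new, previously-unavailable smaller-index allocations appear. The main (only) obstacle is thus bookkeeping rather than mathematical depth: making precise that ``remove points with non-positive gap'' is well-defined as an iterative process and that each step is an instance of the monotonicity-under-restriction argument above. I expect the whole argument to be two short paragraphs once the single-deletion lemma is stated cleanly.
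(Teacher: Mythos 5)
Your argument is correct and follows essentially the same route as the paper's: removing points from $Q$ only weakly enlarges each surviving $\gaptwo{k}{i}{X'}{Q'}$ (since each is a minimum over a smaller pool of earlier allocations), and the dropped terms are non-positive, so the menu gap weakly increases. A minor remark: your displayed inequality chain in fact proves the claim directly for the one-shot removal of \emph{all} offending points at once, so the iterative framing is dispensable — and, strictly interpreted, iterating single deletions could stall short of $(X',Q')$, since a gap that was non-positive in $(X,Q)$ may turn positive after earlier deletions enlarge the minimum.
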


\begin{proof}
Consider the earliest integer $i$ such that $\gaptwo{k}{i}{X}{Q} \leq 0$. Since it is non-positive, removing $({x}_i, \vec{q}_i)$ from $(X,Q)$ will weakly increase the sum of the gaps up to $i$. Moreover, if $\vec{q}_i$ was helping set the gap for some later $({x}_j, \vec{q}_j)$, then $\gaptwo{k}{j}{X'}{Q'} \geq \gaptwo{k}{j}{X}{Q}$ since by removing $\vec{q}_i$ we are reducing the number of earlier points to compare to. Therefore, removing any point with negative gap can only weakly increase the menu gap of the resulting subsequence.
\end{proof}

\section{Bundling Approximates the Optimal Buy-N Revenue}
\label{sec:main}

As highlighted in the introduction of the paper, the first part of the proofs of Theorems~\ref{thm:MainN},~\ref{thm:GeneralVal} will be via the surrogate quantity, $\MenuGapTwo{k}{X}{Q}$. We will first show that for any distribution over $n$ items $\distD$, there exists two sequences of points $(X, Q)$ such that $\MenuGapTwo{k}{X}{Q}$ upper bounds the ratio between the revenue-optimal buy-$k$ mechanism for $\distD$ and the revenue from bundling, up to a $O(k)$ factor: 

\begin{lemma}
\label{lem:LBnk}
For any distribution of monotone valuations $\distD$ over $n$ items and any buy-$k$ incentive compatible mechanism $\mech$ for $\distD$, there exists a sequence of valuations $X = \{{x_i}\}_{i=1} \subseteq \mathcal{X}$, and a sequence of allocations $Q = \{\vec{q}_i\}_{i=0} \subseteq \mech$ (starting with $\vec{q}_0 = (0,\dots,0)$) such that 
$$ \MenuGapTwo{k}{X}{Q} \geq \frac{\bkrev{\distD, \mech}{k}}{9k \cdot \brev{\distD}}.$$
\end{lemma}

Next we will show that this quantity itself is upper bounded for \emph{all} pairs of sequences $(X,Q)$. In particular, when the buyer has additive valuations, we show that the quantity is upper bounded by $n$. The proof of Theorem \ref{thm:MainN} then follows directly.

\begin{lemma}
\label{lem:UBadditive}
For all sequences $X, Q$ as defined in Definition~\ref{def:main_additive}, coming from an additive valuation function, it holds that $\MenuGapTwo{n}{X}{Q} \leq n$.
\end{lemma}

\thmMainN*
\begin{proof}[Proof of Theorem~\ref{thm:MainN}]
Follows directly from Lemma~\ref{lem:LBnk} by choosing $\mech$ to be the revenue-optimal buy-$k$ incentive-compatible mechanism and Lemma~\ref{lem:UBadditive} (setting $k = n$). 
\end{proof}

The proof of Theorem~\ref{thm:GeneralVal} is similar, but only a weaker version of Lemma~\ref{lem:UBadditive} can be proved:

\thmGeneralVal*

\begin{lemma}
\label{lem:UBnn}
For all sequences $X, Q$ as defined in Definition~\ref{def:main}, coming from a monotone valuation function, it holds that $\MenuGapTwo{n}{X}{Q} \leq n \cdot 2^n$.
\end{lemma}

\begin{proof}[Proof of Theorem~\ref{thm:GeneralVal}]
Follows directly from Lemma~\ref{lem:LBnk} by choosing $\mech$ to be the revenue-optimal buy-$k$ incentive-compatible mechanism and Lemma~\ref{lem:UBnn} (setting $k = n$). 
\end{proof}

Note that in Lemma~\ref{lem:LBnk}, the number of times the buyer can interact with the mechanism, $k$, may be different than the number of items $n$ for sale. However, we are only able to prove Lemma~\ref{lem:UBnn} for the case where $k \geq n$. We hope future work can address the question of what happens when $k<n$.

\subsection{Menu Gap Approximately Upper Bounds Revenue Gap: Proof of Lemma~\ref{lem:LBnk}}

The proof of Lemma~\ref{lem:LBnk} is split into two parts. In the first part, we will any buy-$k$ incentive-compatible mechanism for $\distD$ and massage it down to a sub-menu of interest whose revenue remains close to the original one. The sub-menu itself may not be buy-$k$ incentive-compatible. However, the key to approximately preserving the revenue will be in just removing entries from the original mechanism and not modifying existing ones. In the second part, we will show how to use an appropriate sub-menu in order to construct the desired sequence of points. The proof of Lemma~\ref{lem:LBnk} is inspired on a similar construction of~\cite{PSW22}.

\subsubsection{Finding a Sub-menu of Interest}

Let $\mech = \{(p_i,\vec{q}_i)\}_{i=1}$ be a buy-$k$ incentive compatible mechanism for distribution $\distD$, where $(p_i,\vec{q}_i)$ denotes the price and expected allocation of the $i$-th option of the menu. 

\begin{claim}
\label{cl:discn}
Let $\mech$ be a buy-$k$ incentive-compatible mechanism for $\distD$, $\mech_c \subseteq \mech$ be the sub-menu of $\mech$ that only offers options of price at least $c$. Then $\bkrev{\distD, \mech_c}{k} \geq \bkrev{\distD, \mech}{k}-c$. 
\end{claim}

\begin{proof}
If a buyer with valuation $x$ chose a menu entry $(p,\vec{q})$ from the original menu $\mech$ with $p \geq c$, they will purchase the same menu entry in $\mech_c$ since $(p,\vec{q})$ was utility-maximizing and no new menu entries were introduced. If $p < c$, it is possible that the buyer would purchase some other option $(p', \vec{q}')$ (or combination of options). Regardless, the loss in revenue from that buyer is bounded by $c f(x)$. Let $S_c$ be the set of valuation vectors that preferred a menu entry in $\mech$ priced at $p < c$. Then the total loss in revenue is at most $c \sum_{x \in S_c} f(x) \leq c$.  \end{proof}

\begin{claim}
\label{cl:splitn}
Let $\mech$ be a mechanism, and let $\mech_1, \mech_2 \subseteq \mech$ be sub-menus of $\mech$ defined as follows: 

\begin{itemize}
    \item $\mech_1$ has all options whose price $p_i \in \cup_{i=0}^{\infty} [c \cdot (2k)^{2i}, c\cdot (2k)^{2i+1})$. 
    \item $\mech_2$ has all options whose price $p_i \in \cup_{i=0}^{\infty} [c \cdot (2k)^{2i+1}, c \cdot (2k)^{2i+2})$. 
\end{itemize}
Then $\max_{i=1,2} \bkrev{\distD, \mech_i}{k} \geq \bkrev{\distD, \mech}{k}/2$.
\end{claim}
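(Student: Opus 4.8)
The plan is to follow the proof of Claim~\ref{cl:split2} essentially verbatim, with the base $3$ replaced by $k+1$ and ``buy-two'' by ``buy-$k$'' throughout. First I would record the elementary fact that the two families of half-open intervals $\bigcup_{i\ge 0}[c(k+1)^{2i},c(k+1)^{2i+1})$ and $\bigcup_{i\ge 0}[c(k+1)^{2i+1},c(k+1)^{2i+2})$ partition $[c,\infty)$, so that (as in Claim~\ref{cl:split2}, i.e.\ after first invoking Claim~\ref{cl:discn} to discard any prices below $c$) every menu entry of $\mech$ lies in exactly one of $\mech_1,\mech_2$; hence $\mech=\mech_1\sqcup\mech_2$. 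Granting this, a one-line averaging argument reduces the claim to the subadditivity inequality $\bkrev{\distD,\mech}{k}\le \bkrev{\distD,\mech_1}{k}+\bkrev{\distD,\mech_2}{k}$, since then $\max_{i}\bkrev{\distD,\mech_i}{k}\ge\tfrac12\big(\bkrev{\distD,\mech_1}{k}+\bkrev{\distD,\mech_2}{k}\big)\ge\tfrac12\,\bkrev{\distD,\mech}{k}$.

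For subadditivity I would fix a type $\vec v$, let $\Lambda_{\vec v}$ be a utility-maximizing purchase of at most $k$ entries from $\mech$, and split it as $\Lambda_{\vec v}=\Lambda^1_{\vec v}\sqcup\Lambda^2_{\vec v}$ with $\Lambda^i_{\vec v}:=\Lambda_{\vec v}\cap\mech_i$. Since $|\Lambda^i_{\vec v}|\le|\Lambda_{\vec v}|\le k$, the multiset $\Lambda^i_{\vec v}$ is itself a legal purchase when only $\mech_i$ is on offer, so the whole reduction comes down to the termwise claim: the revenue that $\mech_i$ extracts from $\vec v$ is at least $p(\Lambda^i_{\vec v})$, the total price of that multiset. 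Given this for both $i$, summing over $\vec v$ against the density $f$ yields $\bkrev{\distD,\mech}{k}=\sum_{\vec v}f(\vec v)\big(p(\Lambda^1_{\vec v})+p(\Lambda^2_{\vec v})\big)\le\bkrev{\distD,\mech_1}{k}+\bkrev{\distD,\mech_2}{k}$.

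The step I expect to be the main obstacle is precisely this termwise claim that $\mech_i$ still collects $p(\Lambda^i_{\vec v})$ from $\vec v$. For buy-one mechanisms it is immediate (the buyer's single chosen entry is either in $\mech_i$ or not), and the buy-two version in Claim~\ref{cl:split2} is dispatched quickly, but for $k\ge2$ one must exploit the product form of the composition map $\lottery(\Lambda)_j=1-\prod_{\vec q_i\in\Lambda}(1-q_{ij})$. The point is that deleting $\mech_{3-i}$ only removes allocation mass the buyer had already ``secured'' via $\Lambda^{3-i}_{\vec v}$, and by this product structure (the quantitative form of Claim~\ref{cl:lotprod}) the marginal value to $\vec v$ of any single entry $\vec q_r\in\Lambda^i_{\vec v}$ on top of $\Lambda^i_{\vec v}\setminus\{\vec q_r\}$ is no smaller than its marginal value on top of $\Lambda_{\vec v}\setminus\{\vec q_r\}$; the latter is at least $p(\vec q_r)$ because $\Lambda_{\vec v}$ is optimal facing $\mech$ and $\Lambda_{\vec v}\setminus\{\vec q_r\}$ is a valid $\le k$-entry purchase. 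So facing $\mech_i$ alone the buyer has no incentive to undercut $\Lambda^i_{\vec v}$, and adopting the usual convention that ties are broken in the seller's favor one concludes the payment is at least $p(\Lambda^i_{\vec v})$. Everything else is bookkeeping identical to Section~\ref{sec:warmup}; in particular the geometric ratio $k+1$ plays no role in this claim and is only needed in the subsequent claims that bound $\MenuGapTwo{k}{X}{Q}$.
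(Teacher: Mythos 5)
Your overall skeleton --- reduce to the subadditivity inequality $\bkrev{\distD,\mech}{k}\le \bkrev{\distD,\mech_1}{k}+\bkrev{\distD,\mech_2}{k}$ and then average --- matches the paper, but the crucial termwise step in your proof of subadditivity is incorrect. The paper's own justification is much simpler: each type ``who purchases an option from $\mech_1$ when presented $\mech$ will buy the same option when only presented $\mech_1$,'' i.e., it treats each relevant type as purchasing a \emph{single} menu entry (which is what happens for types whose favourite option in the underlying buy-$k$ incentive-compatible menu $\mech^*$ is priced at least $c$ and therefore survives). You instead try to handle genuinely multi-set purchases, and the step ``facing $\mech_i$ alone the buyer has no incentive to undercut $\Lambda^i_{\vec v}$'' does not follow from the marginal-value bounds you proved; it is in fact false in general.

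What you establish is that every $\vec q_r\in\Lambda^i_{\vec v}$ has marginal value at least $p(\vec q_r)$ on top of $\Lambda^i_{\vec v}\setminus\{\vec q_r\}$. That only rules out the buyer improving by \emph{deleting a single entry} from $\Lambda^i_{\vec v}$; it does not rule out switching to an entirely different, cheaper multi-set of $\mech_i$-entries, including ones that do not appear in $\Lambda_{\vec v}$ at all. Once $\mech_{3-i}$ is removed the buyer regains $|\Lambda^{3-i}_{\vec v}|$ of its $k$ slots and can profitably fill them with duplicates of a cheap $\mech_i$ option. A concrete counterexample with $n=k=2$, $c=1$: let $\vec v=(14,10)$ and take menu entries $a=\big((1,0),\,2.5\big)$, $a'=\big((0.5,0.5),\,1\big)$, $b=\big((0,1),\,5\big)$, so $\mech_1=\{a,a'\}$ and $\mech_2=\{b\}$. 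Facing $\mech=\{a,a',b\}$ the buyer's optimum is $\Lambda=\{a,b\}$ with total price $7.5$, hence $p(\Lambda^1_{\vec v})=2.5$; but facing $\mech_1$ alone the optimum is $\{a',a'\}$ with price $2<2.5$, and in fact $\bkrev{\distD,\mech_1}{2}+\bkrev{\distD,\mech_2}{2}=2+5=7<7.5=\bkrev{\distD,\mech}{2}$, so even the subadditivity inequality you are after fails for this (non-incentive-compatible) $\mech$. The route that works is the paper's: argue at the level of single-option purchases from the original buy-$k$ incentive-compatible menu $\mech^*$, partition types by which price bin their purchased single option falls in, and observe that each such type still buys that same option when offered only $\mech_i\subseteq\mech^*$, because it was already utility-maximizing against the strictly larger family of at most $k$-entry deviations available in $\mech^*$.
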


\begin{proof}
Because $\mech_1 \cup \mech_2 = \mech$, observe that 
$$\bkrev{\distD, \mech}{k} \leq \bkrev{\distD, \mech_1}{k} + \bkrev{\distD, \mech_2}{k}.$$ This is because any buyer with valuation $x$ who purchases an option from $\mech_1$ when presented the menu $\mech$ will buy the same option when only presented $\mech_1$. By a simple averaging argument, the better of the two menus must get revenue at least half of the revenue of the original menu.
\end{proof}

\begin{lemma}
\label{lem:submenun}
There exists a menu $\mech$ such that 

\begin{itemize}
    \item All prices are at least $c$.
    \item All prices belong to the set of intervals $\bigcup_{i=0}^{\infty} [c \cdot (2k)^{2i+a}, c \cdot (2k)^{2i+a+1})$ for an $a \in \{0,1\}$.
    \item $\bkrev{\distD, \mech}{k} \geq \frac{\bkrev{\distD}{k}-c}{2}$.  
\end{itemize}

\end{lemma}

\begin{proof}
Take the initial buy-$k$ incentive-compatible menu $\mech$, apply Claim~\ref{cl:discn} to obtain a menu $\mech'$ that satisfies the first bullet point. Take the menu $\mech'$ and apply Claim~\ref{cl:splitn} to obtain a menu $\mech''$ that immediately satisfies the first and second bullet points. Finally, due to the revenue guarantees of Claims~\ref{cl:discn},~\ref{cl:splitn}, we have that $\bkrev{\distD, \mech''}{k} \geq \frac{\bkrev{\distD, \mech'}{k}}{2} \geq \frac{\bkrev{\distD, \mech}{k}-c}{2}$. 
\end{proof}

\subsubsection{Construction of the Sequences $X, Q$}

In this subsection we will show how to use the sub-menu found in the previous subsection to construct the sequences $(X,Q)$ of interest who would witness 
$$\MenuGapTwo{k}{X}{Q} \geq O(\bkrev{\distD, \mech}{k}/\brev{\distD}).$$ Consider the menu $\mech''$ from Lemma~\ref{lem:submenun}. Let $\mathcal{B}_j \subseteq \mech''$ be the sub-menu that has all menu entries priced in $[c \cdot (2k)^{2j+a}, c \cdot (2k)^{2j+a+1})$ for the same $a \in \{0, 1\}$ from Lemma~\ref{lem:submenun}. We say a valuation ${x} \in \mathcal{B}_j$ if the menu option $(p(x),\vec{q}(x)) \in \mathcal{B}_j$. Let ${x}_j$ be the valuation on $\mathcal{B}_j$ such that ${x}_j([n]) \leq (1+\delta) {x}([n]) \forall x \in \mathcal{B}_j$. We call valuation $x_j$ the \emph{representative of bin} $\mathcal{B}_j$. 

\begin{claim}
\label{cl:bundleprobn}
$\Pr(x \in \mathcal{B}_j) \leq \frac{\brev{\distD}(1+\delta)}{{x}_j([n])}$.
\end{claim}

\begin{proof}
Consider the mechanism that sells the grand bundle at price ${x}_j([n])/(1+\delta)$. Since any valuation on $\mathcal{B}_j$ has value at least that much for the grand bundle, the revenue of this menu is at least $\Pr({x} \in \mathcal{B}_j) x_j([n])/(1+\delta)$. But this is a grand bundling menu and is thus its revenue is at most $\brev{\distD}$. 
\end{proof}

Let $(X,Q)$ be the sequence defined by the choice of ${x}_j$ and their respective allocations in $\mech$, $\vec{q}_j$.

\begin{claim}
\label{cl:gapn}
$\frac{\gaptwo{k}{j}{X}{Q}}{{x}_j([n])} \geq \frac{p_j}{2 \cdot {x}_j([n])} \geq \frac{ \Pr(x \in \mathcal{B}_j) p_j}{2 \cdot \brev{\distD} (1+\delta)}$. 
\end{claim}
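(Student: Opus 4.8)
The plan is to transcribe the two--item argument from Claim~\ref{cl:gap2}, replacing the factor $3$ by $k+1$ and invoking buy-$k$ (rather than buy-two) incentive-compatibility of $\mech^*$. Fix the representative $\vec{x}_j$ of bin $\mathcal{B}_j$, with associated entry $(p_j,\vec{q}_j)\in\mech^*$, and let $j_1,\dots,j_k<j$ be the indices attaining the minimum in the definition of $\gaptwo{k}{j}{X}{Q}$; if $\vec{x}_j$ has fewer than $k$ predecessors in the sequence $X$, pad this tuple with copies of $\vec{q}_0=\vec{0}$, which is always available at price $p_0=0$ (exactly as in the proof of Claim~\ref{cl:bmrev}). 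Since $\mech^*$ is buy-$k$ incentive-compatible, the buyer with valuation $\vec{x}_j$ weakly prefers the single entry $(p_j,\vec{q}_j)$ to buying the multiset $\{\vec{q}_{j_1},\dots,\vec{q}_{j_k}\}$, so
$$\vec{x}_j\cdot\vec{q}_j - p_j \;\geq\; \vec{x}_j\cdot\lottery(\vec{q}_{j_1},\dots,\vec{q}_{j_k}) - \sum_{\ell=1}^k p_{j_\ell},$$
which, after rearranging and recalling the definition of $\gaptwo{k}{j}{X}{Q}$, gives $\gaptwo{k}{j}{X}{Q}\geq p_j - \sum_{\ell=1}^k p_{j_\ell}$.

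Next I would bound each term $p_{j_\ell}$ using the binning structure of Lemma~\ref{lem:submenun}: the representatives are indexed by bin, the prices in $\mathcal{B}_i$ lie in $[c(k+1)^{2i+a},c(k+1)^{2i+a+1})$, and $j_\ell<j$ means the bin of $\vec{q}_{j_\ell}$ precedes the bin of $\vec{q}_j$. Hence $p_{j_\ell} < c(k+1)^{2j+a-1} = c(k+1)^{2j+a}/(k+1) \leq p_j/(k+1)$ (and $p_{j_\ell}=0\leq p_j/(k+1)$ for any padded copy of $\vec{q}_0$). Summing the $k$ terms yields $\sum_{\ell=1}^k p_{j_\ell}\leq k\,p_j/(k+1)$, so $\gaptwo{k}{j}{X}{Q}\geq p_j - k\,p_j/(k+1) = p_j/(k+1)$. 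Dividing by $|\vec{x}_j|_1$ establishes the first inequality. The second inequality is then immediate from Claim~\ref{cl:bundleprobn}: rearranging $\Pr(\vec{x}\in\mathcal{B}_j)\leq \brev{\distD}(1+\delta)/|\vec{x}_j|_1$ gives $1/|\vec{x}_j|_1\geq \Pr(\vec{x}\in\mathcal{B}_j)/(\brev{\distD}(1+\delta))$, and substituting this into $p_j/((k+1)|\vec{x}_j|_1)$ produces the claimed bound.

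There is no substantial obstacle here — the statement is essentially the $k$-fold analogue of Claim~\ref{cl:gap2} — so the only thing requiring care is the bookkeeping: making sure the null option $\vec{q}_0$ legitimately handles the case of fewer than $k$ predecessors, and checking that the inter-bin price gap of $(k+1)^2$ in Lemma~\ref{lem:submenun} is enough to guarantee that \emph{every} strictly earlier representative is cheaper by a factor at least $(k+1)$, so that the $k$ subtracted prices together consume at most a $k/(k+1)$ fraction of $p_j$.
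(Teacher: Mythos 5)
Your proof is correct and follows essentially the same approach as the paper's: invoke buy-$k$ incentive-compatibility to get $\gaptwo{k}{j}{X}{Q}\geq p_j-\sum_{\ell}p_{j_\ell}$, use the geometric $(k+1)^2$-separation of bins from Lemma~\ref{lem:submenun} to show each earlier price is at most $p_j/(k+1)$, and finish with Claim~\ref{cl:bundleprobn}. The only difference is that you spell out the padding with $\vec{q}_0$ for fewer than $k$ predecessors and compute the inter-bin price ratio explicitly, which the paper leaves implicit.
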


\begin{proof}
Because the initial mechanism $\mech^*$ was buy-$k$ incentive-compatible, we know that for any previous set of $k$ options $\vec{q}_{j_1},\dots, \vec{q}_{j_k}$,  

$${x}_j(\vec{q}_j) - p_j \geq {x}_j(\lottery(\vec{q}_{j_1}, \dots,\vec{q}_{j_k})) -\sum_{i=1}^k {p}_{j_i}.$$
We can rewrite this as 

$$ \frac{\gaptwo{k}{j}{X}{Q}}{{{x}_j([n])}} \geq \frac{p_j - \sum_{i=1}^k p_{j_i}}{{x}_j([n])}.$$ 
Recall by our choice of points and the fact that $j_i < j$, $p_{j} \geq 2k \cdot p_{j_i}$. Therefore, the right hand is at least  $\frac{p_j}{2 {x}_j([n])}.$ The second inequality comes from Claim~\ref{cl:bundleprobn}. 
\end{proof}

We are now ready to prove Lemma~\ref{lem:LBnk}.

\begin{proof}[Proof of Lemma~\ref{lem:LBnk}]

Let us first observe that 

\begin{equation}
\label{eq:pricegapn}
 \bkrev{\distD, \mech''}{k} = \sum_j \sum_{{x} \in \mathcal{B}_j} p({x}) f({x}) \leq \sum_j \Pr({x} \in \mathcal{B}_j) 2k p_j.
\end{equation}
 
Recall that the price any valuations $x\in \mathcal{B}_j$, its price $p(x)$ is no greater than $2k p_j$. Therefore, the inequality follows. Moreover, from Claim~\ref{cl:gapn} we get that 

\begin{equation} 
\label{eq:gapnlb}
\MenuGapTwo{k}{X}{Q} = \sum_j \frac{\gaptwo{k}{j}{X}{Q}}{x_j([n])} \geq \sum_j \frac{ \Pr(x \in \mathcal{B}_j)p_j}{2 \cdot \brev{\distD} (1+\delta)}.
\end{equation}
Applying Eq.~\ref{eq:pricegapn} together with Lemma~\ref{lem:submenun} we get that
\begin{equation}
\label{eq:gapnub}
\sum_j \Pr(x \in \mathcal{B}_j) p_j \geq \frac{\bkrev{\distD, \mech''}{k}}{2k} \geq \frac{(\bkrev{\distD, \mech}{k}-c)}{4k}. 
\end{equation}
Putting Eqs.~\ref{eq:gapnlb},~\ref{eq:gapnub} we get that
\begin{equation} 
\label{eq:finaln}
\MenuGapTwo{k}{X}{Q} \geq \frac{ (\bkrev{\distD, \mech}{k}-c)}{8k\cdot \brev{\distD}(1+\delta)}.
\end{equation}
Let $c = \bkrev{\distD, \mech}{k}/100, \delta = 1/100$ in Equation~\ref{eq:finaln}. Therefore, 

$$ \frac{99 \cdot \bkrev{\distD, \mech}{k}}{101\cdot 8k \cdot \brev{\distD}} \geq \frac{\bkrev{\distD,\mech}{k}}{9 \cdot k \cdot \brev{\distD}}.$$ 
\end{proof}

\subsection{Menu Gap is Finite when $k \geq n$: Proof of Lemmas~\ref{lem:UBadditive},~\ref{lem:UBnn}}

The proof of Lemmas~\ref{lem:UBadditive},~\ref{lem:UBnn} will be similar. We introduce some common notation to both Lemmas before proving each of them individually. Given a sequence of points $Q$, let $Q_i$  be the sequence truncated at the $i$-th point, that is to say $Q_i = \{\vec{q}_0, \vec{q}_1, \dots, \vec{q}_i \}$. Let $\vec{m}_i = (\max_{\vec{q}_{i} \in Q_i} \{\vec{q}_{i,1}\}, \dots, \max_{\vec{q}_{i} \in Q_i} \{\vec{q}_{i,n}\})$ be the $n$-dimensional vector whose entries are the largest coordinates among the points in $Q_i$. 

\subsubsection{Proof Lemma~\ref{lem:UBadditive}}

In this subsection we abuse the fact that for additive buyers, we can think of their valuation function $x$ as an $n$-dimensional vector $\vec{x} = (x_1, \dots, x_n)$ and $x(\vec{q}) = \vec{x}\cdot \vec{q}$. 

\begin{claim}
\label{cl:maxgapn}
For any $i$, $\gaptwo{n}{i}{X}{Q}/||x_i||_1 \leq \sum_{d=1}^n \max \{\vec{q}_{i,d}-\vec{m}_{i-1,d}, 0\}.$
\end{claim}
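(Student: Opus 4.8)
The plan is to run exactly the argument from Claim~\ref{cl:maxgap2}, with the two witness points used there replaced by $n$ witness points, one dedicated to each coordinate; this is precisely what a buy-$n$ deviation budget allows. First I would fix $i$ and, for every coordinate $d \in [n]$, select an index $i^*_d < i$ of a point in $Q_{i-1}$ that attains the running coordinatewise maximum in coordinate $d$, i.e. $\vec{q}_{i^*_d, d} = \vec{m}_{i-1, d}$. Such an index always exists by the definition of $\vec{m}_{i-1}$ as the coordinatewise max over $Q_{i-1} = \{\vec{q}_0, \dots, \vec{q}_{i-1}\}$; in particular $\vec{q}_0 = \vec{0}^n$ is a legitimate choice when a coordinate's running max is still $0$, and repeats among the $i^*_d$ are permitted since $\gaptwo{n}{i}{X}{Q}$ is a minimum over multisets, so the edge case $i \le n$ requires no separate treatment.

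Next, since $\gaptwo{n}{i}{X}{Q}$ is a minimum over all choices of $n$ earlier indices, I would upper bound it by the value witnessed by $(i^*_1, \dots, i^*_n)$ and divide through by $||\vec{x}_i||_1$, obtaining
$$\frac{\gaptwo{n}{i}{X}{Q}}{||\vec{x}_i||_1} \le \sum_{d=1}^n \frac{\vec{x}_{i,d}}{||\vec{x}_i||_1}\bigl(\vec{q}_{i,d} - \lottery(\vec{q}_{i^*_1},\dots,\vec{q}_{i^*_n})_d\bigr).$$
Applying Claim~\ref{cl:lotprod}, we have $\lottery(\vec{q}_{i^*_1},\dots,\vec{q}_{i^*_n})_d \ge \max_{e\in[n]} \vec{q}_{i^*_e,d} \ge \vec{q}_{i^*_d,d} = \vec{m}_{i-1,d}$ for every $d$, so $\vec{q}_{i,d} - \lottery(\cdots)_d \le \vec{q}_{i,d} - \vec{m}_{i-1,d}$. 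Since each weight $\vec{x}_{i,d}/||\vec{x}_i||_1$ lies in $[0,1]$, each summand is at most $\max\{\vec{q}_{i,d} - \vec{m}_{i-1,d}, 0\}$ (take the weight to be $1$ when the bracket is positive and $0$ when it is negative), which yields the claimed bound $\sum_{d=1}^n \max\{\vec{q}_{i,d} - \vec{m}_{i-1,d}, 0\}$.

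I do not anticipate a genuine obstacle here: the content mirrors the two-item case, and the only point needing care is checking that all $n$ witnesses are legitimately available (indices $< i$, repeats allowed, $\vec{q}_0$ covering still-zero coordinates) — which is exactly where having a buy-$n$, rather than a buy-$k$ with $k<n$, budget is essential, and explains why Lemma~\ref{lem:UBnn} is stated only for $k \ge n$. With Claim~\ref{cl:maxgapn} in hand, Lemma~\ref{lem:UBnn} should follow by summing over $i$ and telescoping each coordinate's running maxima exactly as in the proof of Lemma~\ref{lem:UB22}, bounding the total by $n$ since each $\vec{q}_{i,d} \le 1$ contributes at most $1$ per coordinate.
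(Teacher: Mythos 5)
Your argument is exactly the paper's: pick $n$ witnesses $i^*_1,\dots,i^*_n < i$ realizing the coordinatewise running maxima, upper bound $\gaptwo{n}{i}{X}{Q}$ by that particular choice, apply Claim~\ref{cl:lotprod} to lower bound the lottery coordinatewise by $\vec{m}_{i-1}$, and then bound each weighted summand by its positive part. The extra remarks about $\vec{q}_0$ covering still-zero coordinates and repeats being allowed are accurate and harmless elaborations, not a different route.
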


\begin{proof}
Let $\vec{x}_i, \vec{q}_i$ be given. Since gap is defined to be the minimum over all pairs of previously placed points, we can just upper bound it by witnessing its value with earlier points. Let $i^*_1, i^*_2,\dots, i^*_n$ be the indices such that: 

\begin{itemize}
    \item $i^*_1, i^*_2, \dots, i^*_n < i,$
    \item $\vec{q}_{i^*_d,d} = \vec{m}_{i-1,d} \forall d \in [n].$
\end{itemize}
That is to say, $\{i^*_d\}_{d=1}^{n}$ are the indices of the points that witness that $\vec{m}_{i-1}$ is indeed the coordinate-wise max of all points in $Q_{i-1}$. Then 
$$\gaptwo{n}{i}{X}{Q}/||\vec{x}_i||_1 \leq \frac{\vec{x}_i}{||\vec{x_i}||_1} \cdot (\vec{q}_i - \lottery(\vec{q}_{i^*_1}, \vec{q}_{i^*_2},\dots,\vec{q}_{i^*_n})).$$
Recall that $$\lottery(\vec{q}_{i^*_1}, \vec{q}_{i^*_2},\dots,\vec{q}_{i^*_n})_d \geq \max\{\vec{q}_{i^*_1}, \vec{q}_{i^*_2},\dots,\vec{q}_{i^*_n}\}_d \geq \vec{m}_{i-1,d}.$$
Therefore, $$ \vec{q}_{i,d} - \lottery(\vec{q}_{i^*_1}, \vec{q}_{i^*_2},\dots, \vec{q}_{i^*_n})_d \leq \vec{q}_{i,d} - \vec{m}_{i-1,d},$$
for all $d \in [n]$. Therefore, for any choice of $\vec{x}_i$, it will be true that 
\begin{align*}
\gaptwo{n}{i}{X}{Q}/||\vec{x}_i||_1 & \leq \frac{\vec{x}_i}{||\vec{x}_i||_1} \cdot (\vec{q}_i - \lottery(\vec{q}_{i^*_1}, \vec{q}_{i^*_2},\dots,\vec{q}_{i^*_n})) \\ 
& = \sum_{d=1}^n \frac{\vec{x}_{i,d}}{||\vec{x}_i||_1} (\vec{q}_{i,d} - \lottery(\vec{q}_{i^*_1},\vec{q}_{i^*_2},\dots,\vec{q}_{i^*_n})_d) \\ 
& \leq \sum_{d=1}^{n} \frac{\vec{x}_{i,d}}{||\vec{x}_i||_1} (\vec{q}_{i,d} - \vec{m}_{i-1,d}) \\
& \leq \sum_{d=1}^{n} \max\{0, \vec{q}_{i,d} - \vec{m}_{i-1,d}\} 
\end{align*}
This proves the claim (Naturally, $\vec{x}_{i,d} \leq ||\vec{x}_i||_1$ for all $d$).
\end{proof}

\begin{proof}[Proof of Lemma~\ref{lem:UBadditive}]
For any pair of sequences $(X,Q)$ coming from an additive valuation function,  
\begin{align*}
\MenuGapTwo{n}{X}{Q} & = \sum_{i=1}^N \gaptwo{n}{i}{X}{Q}/||x_i||_1 \\ 
& \leq \sum_{d=1}^{n} \sum_{i=1}^N (\max \{\vec{q}_{i,d}-\vec{m}_{i-1,d}, 0\}) \\
& \leq \sum_{d=1}^{n} \sum_{i=1}^N (\max \{\vec{m}_{i,d}-\vec{m}_{i-1,d}, 0\}) \\ 
& \leq \sum_{d=1}^{n} \sum_{i=1}^N (\vec{m}_{i,d}-\vec{m}_{i-1,d}) \\ 
& \leq n.
\end{align*}

The first inequality follows from Claim~\ref{cl:maxgapn}. For the second inequality, first note that for any fixed $d$, by definition $\vec{q}_{i,d} \leq \vec{m}_{i,d}$, with equality only if $\vec{q}_{i,d} \geq \vec{q}_{i',d}$ for all $i' < i$. But note also that by definition $\vec{m}_{i,d}-\vec{m}_{i-1,d} \geq 0$. Therefore $\max \{\vec{m}_{i,d}-\vec{m}_{i-1,d}, 0 \} \leq \vec{m}_{i,d}-\vec{m}_{i-1,d}$. The last inequality follows from observing that the final sum across each coordinate telescopes. Since $\vec{q}_{i,d} \leq 1$, the sum is at most $1$ per coordinate. 
\end{proof}

\begin{observation}
Setting both sequences $(X,Q)$ equal to the standard basis of $\mathbb{R}^n$ shows that Lemma~\ref{lem:UBadditive} is tight.
\end{observation}

\subsubsection{Proof of Lemma~\ref{lem:UBnn}}
In this subsection we assume the valuation function is monotone, i.e. $v(S) \geq v(T)$ whenever $T \subseteq S$. We will show that for such valuation classes, $\MenuGapTwo{n}{X}{Q}\le 2^{n}\cdot n$ for all sequences $(X,Q)$. To prove this, we'll make use of the following lemma.

\begin{claim}
For any monotone valuation $v$ and two vectors $p,q\in[0,1]^n$ with $p\le q$ coordinate-wise (i.e., $p_i\le q_i$ for all $i\in[n]$), we have $v(p)\le v(q)$.\label{cl:monotone}
\end{claim}
\begin{proof}
To simplify the proof, observe that we only need to consider the case where $p$ and $q$ differ in exactly one coordinate (i.e., $p_i=q_i$ for $i\in[n-1]$ and $p_n<q_n$). We can then apply this result (up to) $n$ times in order to conclude our desired result. 

Since we can partition the powerset of $[n]$ into sets which contain $n$ and sets which don't contain $n$, we can rewrite $v(\vec{p})$ as:
\begin{align}
    v(\vec{p})=\sum_{A\subseteq [n]}\Pr(\vec{p},A)\cdot v(A)=\sum_{B\subseteq [n-1]}\Pr(\vec{p},B\cup\{n\})\cdot v(B\cup\{n\})+\sum_{B\subseteq[n-1]}\Pr(\vec{p},B)\cdot v(B)
\end{align}
This implies that $v(\vec{q})-v(\vec{p})=A+B$, where $A,B$ are defined as:
\begin{align}
    A&=\sum_{B\subseteq [n-1]}[\Pr(\vec{q},B\cup\{n\})-\Pr(\vec{p},B\cup\{n\})]\cdot v(B\cup\{n\})\\
    B&=\sum_{B\subseteq[n-1]}[\Pr(\vec{q},B)-\Pr(\vec{p},B)]\cdot v(B)
\end{align}
Our goal is to show $A+B\ge 0$. To do this, let's first find explicit expressions of $[\Pr(\vec{q},B\cup\{n\})-\Pr(\vec{p},B\cup\{n\})]$ and $[\Pr(\vec{q},B)-\Pr(\vec{p},B)]$. We have:
\begin{align}
    [\Pr(\vec{q},B\cup\{n\})-\Pr(\vec{p},B\cup\{n\})]=(q_n-p_n)\prod_{i\in B}q_i\cdot\prod_{i\not\in B\cup\{n\}}(1-q_i)
\end{align}
and
\begin{align}
    [\Pr(\vec{q},B)-\Pr(\vec{p},B)]=[(1-q_n)-(1-p_n)]\prod_{i\in B}q_i\cdot\prod_{i\not\in B\cup\{n\}}(1-q_i)    
\end{align}
by direct calculation (and noting that $p_i=q_i$ for all $i\in[n-1]$). Thus, we observe that the two probabilities are additive complements of each other. Hence, if we denote $p_B=[\Pr(\vec{q},B\cup\{n\})-\Pr(\vec{p},B\cup\{n\})]$, we can write
\begin{align}
    v(\vec{q})-v(\vec{p})=\sum_{B\subseteq[n-1]}p_B\cdot[v(B\cup\{n\})-v(B)].
\end{align}
But we know the right hand side is at least $0$ since $v(B\cup\{n\})\ge v(B)$ by monotonicity. Hence, we can conlude $v(\vec{q})-v(\vec{p})\ge 0\implies v(\vec{p})\le v(\vec{q})$.
\end{proof}

We are now ready to prove Lemma~\ref{lem:UBnn}.

\begin{proof}[Proof of Lemma~\ref{lem:UBnn}]
Recall that, for all $i$, we can choose $i_1^*,\ldots,i_n^*<i$ such that $\vec{q}_{i^*_d,d}=\vec{m}_{i-1,d}$ for all $d\in[n]$. Thus, by monotonicity and Claim~\ref{cl:monotone}, we have $\max_{j_1,\ldots,j_n<i}v(\lottery(q_{j_1},\ldots,q_{j_k}))\ge v(\vec{m}_{i-1})$. Additionally, by definition of $\vec{m}_i$ and monotonicity (combined with Claim \ref{cl:monotone}), we have $v(\vec{q}_i)\le v(\vec{m}_i)$. Consequently, we can write
\begin{align*}
\frac{\gaptwo{k}{i}{X}{Q}}{v_i([n])}&=\min_{j_1,\ldots,j_k<i}\frac{v_i(\vec{q}_i)-v_i(\lottery(\vec{q}_{j_1},\ldots,\vec{q}_{j_k}))}{v_i([n])}\le\frac{v_i(\vec{m}_i)-v_i(\vec{m}_{i-1})}{v_i([n])}
\end{align*}
since $v_i$ are assumed to be monotone. Writing out the definition of $v_i$, we know
\begin{align*}
\frac{v_i(\vec{m}_i)-v_i(\vec{m}_{i-1})}{v_i([n])}&= \sum_{A\in2^{[n]}}\frac{v_i(A)}{v_i([n])}\cdot\left[\Pr(\vec{m}_{i}, A)-\Pr(\vec{m}_{i-1},A)\right]\\
&\le \sum_{A\in2^{[n]}}|\Pr(\vec{m}_{i},A)-\Pr(\vec{m}_{i-1},A)|
\end{align*}
since $v_i(A)\le v_i([n])$ by monotonicity. Combining this with above yields
\begin{align*}
    \MenuGapTwo{n}{X}{Q}&=\sum_{i=1}^{N}\frac{\gaptwo{n}{i}{X}{Q}}{v_i([n])}\\
    &\le \sum_{A\in2^{[n]}} \sum_{i=1}^{N} |\Pr(\vec{m}_{i},A)-\Pr(\vec{m}_{i-1},A)|
\end{align*}
It remains to show
$\sum_{i=1}^{N} |\Pr(\vec{m}_{i},A)-\Pr(\vec{m}_{i-1},A)|\le n$
for all $A\in2^{[n]}$. It would then follow directly that $\MenuGapTwo{n}{X}{Q}\le 2^{n}\cdot n$. Fix a set $A$ and consider each term $|\Pr(\vec{m}_i,A)-\Pr(\vec{m}_{i-1},A)|$ individually. 

Let $\vec{m}_i=\vec{m}^{(0)},\vec{m}^{(1)},\ldots, \vec{m}^{(n)}=\vec{m}_{i-1}$ be a sequence of vectors where $\vec{m}^{(j)}$ matches $\vec{m}_{i}$ on the first $j$ coordinates and matches $\vec{m}_{i-1}$ on the last $n-j$ coordinates. By the triangle inequality,
$$|\Pr(\vec{m}_i,A)-\Pr(\vec{m}_{i-1},A)|\le \sum_{j=1}^{n}|\Pr(\vec{m}^{(j)},A)-\Pr(\vec{m}^{(j-1)},A)|.$$
But by definition of $\Pr(\vec{q},S)$, we know
$$|\Pr(\vec{m}^{(j)},A)-\Pr(\vec{m}^{(j-1)},A)|= \left|\prod_{\ell \in A} \vec{m}^{(j)}_\ell \prod_{\ell\not \in A} (1-\vec{m}^{(j)}_\ell)-\prod_{\ell \in A} \vec{m}^{(j)}_\ell \prod_{\ell\not \in A} (1-\vec{m}^{(j)}_\ell)\right|\le \vec{m}_{i,j}-\vec{m}_{i-1,j}$$ since $\vec{m}^{(j)}$ and $\vec{m}^{(j-1)}$ differ only in the $j^{th}$ coordinate and the remaining probabilities are at most one. Hence, combining with the above, we have
$$|\Pr(\vec{m}_i,A)-\Pr(\vec{m}_{i-1},A)|\le \sum_{j=1}^{n}\vec{m}_{i,j}-\vec{m}_{i-1,j}.$$
Summing over $i\in[N]$, we see that the sum telescopes:
$$\sum_{i=1}^{N}|\Pr(\vec{m}_i,A)-\Pr(\vec{m}_{i-1},A)|\le \sum_{j=1}^{n}\sum_{i=1}^{N}\vec{m}_{i,j}-\vec{m}_{i-1,j}\le \sum_{j=1}^{n}\vec{m}_{N,j}\le n,$$
since $\vec{m}_{N,j}\le 1$ for all $j\in[n]$.
\end{proof}
\section{Few Tickets Do Not Suffice: Proof of Theorem~\ref{thm:mainLB}}
\label{sec:few}

In this section we show that if $k \leq n^{1/2-\varepsilon}$, then there exists a distribution $\distD$ over $n$ items for which there is an exponential gap in $n$ (up to $\textsc{poly}(n)$) between $\brev{\distD}$ and $\bkrev{\distD}{k}$, the revenue attained by the optimal buy-$k$ mechanism. 

\thmMainLB*

The proof of Theorem~\ref{thm:mainLB} will be broken
down in three steps. Firstly, we will describe the pair
of sequences $(X^L,Q^L)$ that we use
(Subsection~\ref{sec:instance}). The construction will
make use of a combinatorial Lemma about cover-free sets
from~\cite{KautzSingleton64}. Next, we will show that
for that instance, $\MenuGapTwo{k}{X^L}{Q^L} \geq
\frac{\exp\left(\Omega(n^\varepsilon)\right)}{2n^2}$
when $k \leq n^{1/2-\varepsilon}$
(Lemma~\ref{cl:expmenugap}). In the final step, we will
show how to construct a distribution $\distD$ such that
$\bkrev{\distD}{k}/\brev{\distD} \geq
\MenuGapTwo{k}{X^L}{Q^L}$ (Lemma~\ref{lem:hnconverse}).
The proof of Lemma~\ref{lem:hnconverse} will use ideas from~\cite{HartN17}. 

Before we delve into the proof of Theorem~\ref{thm:mainLB}, we analyze its implications for mechanisms with polynomial menu size. We invoke the following Corollary from~\cite{HartN17}. 

\begin{corollary}[Restated from~\cite{HartN17}]
\label{cor:bundlebound}
Consider any mechanism $\mathcal{M}$ with menu size $M$, then for any distribution $\distD \in \mathbb{R}^{n}$ 
 
\begin{equation*}
    M \cdot \brev{\distD} \geq \rev{\distD, \mathcal{M}}.
\end{equation*}
\end{corollary}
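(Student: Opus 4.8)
The plan is to prove this by decomposing the revenue of $\mathcal{M}$ across its (at most) $M$ menu entries and bounding each piece by $\brev{\distD}$; this is essentially the argument of~\cite{HartN17}, which I would reproduce for completeness. Write $\mathcal{M} = \{(p_j, \vec{q}_j)\}_{j=1}^{M}$, where without loss of generality one entry is the null option $(0, \vec{0})$ so that individual rationality holds. For each valuation $\vec{v}$ in the support of $\distD$, let $\sigma(\vec{v}) \in [M]$ be the index of the menu entry that a utility-maximizing buyer of type $\vec{v}$ purchases, with ties broken by a fixed rule, and set $S_j = \{\vec{v} : \sigma(\vec{v}) = j\}$. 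The sets $\{S_j\}$ partition the support, so $\rev{\distD, \mathcal{M}} = \sum_{j=1}^{M} p_j \cdot \Pr_{\vec{v}\sim\distD}[\vec{v} \in S_j]$.

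The key observation is that every buyer in $S_j$ values the grand bundle at least $p_j$. Indeed, individual rationality (the preference of entry $j$ over the null option) gives $\vec{v}(\vec{q}_j) - p_j \geq 0$, and since $\vec{q}_j \in [0,1]^n$ and the buyer is additive, $\vec{v}([n]) = \sum_i v_i \geq \sum_i v_i q_{ji} = \vec{v}(\vec{q}_j) \geq p_j$. Hence $S_j \subseteq \{\vec{v} : \sum_i v_i \geq p_j\}$, so the grand-bundling mechanism posting price $p_j$ earns revenue $p_j \cdot \Pr[\sum_i v_i \geq p_j] \geq p_j \cdot \Pr[\vec{v} \in S_j]$, and by definition of $\brev{\distD}$ as the optimum over bundling prices this is at most $\brev{\distD}$. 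Summing over $j \in [M]$ gives $\rev{\distD, \mathcal{M}} = \sum_{j=1}^{M} p_j \cdot \Pr[\vec{v} \in S_j] \leq \sum_{j=1}^{M} \brev{\distD} = M \cdot \brev{\distD}$, as desired.

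There is no substantial obstacle here; the only points requiring a small amount of care are the bookkeeping around indifferent buyers (handled by the fixed tie-breaking rule, which makes $\sigma$ well-defined) and buyers who (weakly) prefer the null option — these contribute a term with $p_j = 0$ and may simply be dropped from the sum. The argument also uses only additivity and monotonicity of the buyer's valuation, both of which are assumed throughout, so it applies to every mechanism and every (possibly correlated) distribution $\distD$.
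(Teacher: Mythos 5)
Your proof is correct and is exactly the standard argument from~\cite{HartN17}; the paper itself does not reprove this corollary but simply cites it, so there is no separate in-paper proof to compare against. The decomposition of revenue across menu entries, the use of individual rationality plus additivity to show that anyone buying entry $j$ values the grand bundle at least $p_j$, and the conclusion that each entry's revenue contribution is at most $\brev{\distD}$ are all precisely the Hart--Nisan argument, and your handling of tie-breaking and of the null option is adequate.
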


This Corollary, combined with Theorem~\ref{thm:mainLB} imply the following Corollary. 

\begin{corollary}
\label{cor:polymenubound} 
Let $\mathcal{M}$ be a buy-$k$ mechanism with menu size $M=\textsc{poly}(n)$, then there exists a distribution $\distD \in \mathbb{R}^{n}$ such that for any single, additive buyer

\begin{equation*}
    \frac{\bkrev{\distD}{k}}{\rev{\distD, \mathcal{M}}} \geq \frac{\exp\left(\Omega (n^{\varepsilon})\right)}{\textsc{poly}(n)}.
\end{equation*}
\end{corollary}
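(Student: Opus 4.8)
\textbf{Proof proposal for Corollary~\ref{cor:polymenubound}.}

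The plan is to simply combine Theorem~\ref{thm:mainLB} with Corollary~\ref{cor:bundlebound} by chaining the two inequalities through the bundling benchmark $\brev{\distD}$. First I would fix $\varepsilon > 0$ and let $k \leq n^{1/2 - \varepsilon}$ so that Theorem~\ref{thm:mainLB} applies: it guarantees a distribution $\distD$ over $n$ items with
\[
\bkrev{\distD}{k} \;\geq\; \frac{\exp\left(\Omega(n^\varepsilon)\right)}{2n^2} \cdot \brev{\distD}.
\]
This is the distribution I would use to witness the Corollary. Next I would invoke Corollary~\ref{cor:bundlebound} applied to the buy-$k$ mechanism $\mathcal{M}$, whose menu size is $M = \textsc{poly}(n)$: since Corollary~\ref{cor:bundlebound} concerns $\rev{\distD, \mathcal{M}}$, the revenue when the buyer purchases a single menu entry, it gives $\rev{\distD, \mathcal{M}} \leq M \cdot \brev{\distD} = \textsc{poly}(n) \cdot \brev{\distD}$, i.e. $\brev{\distD} \geq \rev{\distD, \mathcal{M}}/\textsc{poly}(n)$.

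Substituting the second bound into the first yields
\[
\bkrev{\distD}{k} \;\geq\; \frac{\exp\left(\Omega(n^\varepsilon)\right)}{2n^2} \cdot \frac{\rev{\distD, \mathcal{M}}}{\textsc{poly}(n)} \;=\; \frac{\exp\left(\Omega(n^\varepsilon)\right)}{\textsc{poly}(n)} \cdot \rev{\distD, \mathcal{M}},
\]
where I have absorbed the factor $2n^2$ into the $\textsc{poly}(n)$ term in the denominator. Rearranging gives exactly the claimed bound $\bkrev{\distD}{k}/\rev{\distD, \mathcal{M}} \geq \exp(\Omega(n^\varepsilon))/\textsc{poly}(n)$.

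The only subtlety worth remarking on — and the one place to be slightly careful rather than a genuine obstacle — is the choice of benchmark in Corollary~\ref{cor:bundlebound}: it bounds $\rev{\distD,\mathcal{M}}$, the buy-\emph{one} revenue of $\mathcal{M}$, not its buy-$k$ revenue. This is harmless for the statement as written, since the Corollary's conclusion is phrased in terms of $\rev{\distD,\mathcal{M}}$ as well; the point being made is that \emph{no} polynomial-menu-size mechanism, used in the ordinary buy-one sense, can approximate the buy-$k$ optimum on this instance. (If one wished to compare against $\bkrev{\distD,\mathcal{M}}{k}$ instead, one would additionally need that the menu-size-vs-bundling bound degrades by at most a $\textsc{poly}(k) = \textsc{poly}(n)$ factor under buy-$k$ access, which is again absorbed into $\textsc{poly}(n)$; but this is not needed here.) Everything else is a one-line transitivity argument, so there is no hard step.
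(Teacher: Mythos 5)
Your proof is correct and is exactly the combination the paper has in mind: chain Theorem~\ref{thm:mainLB} with Corollary~\ref{cor:bundlebound} through the $\brev{\distD}$ benchmark and absorb the $2n^2$ into the polynomial. The paper states this combination without spelling it out, and your remark about the buy-one versus buy-$k$ benchmark distinction is a sensible clarification of what $\rev{\distD, \mathcal{M}}$ means in the statement.
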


In other words, Corollary~\ref{cor:polymenubound} rules out all polynomial-sized mechanisms $\mathcal{M}$ as candidates for good approximations to $\bkrev{\distD}{k}$ for the case $k \leq n^{1/2-\varepsilon}$. 

\subsection{Proof of Theorem~\ref{thm:mainLB}} 

Throughout this Subsection we again abuse the fact that for additive valuation functions $x$, we can think of the valuation as an $n$-dimensional vector $\vec{x} = (x_1, \dots, x_n)$. 

\subsubsection{Part 1: Description of the Instance}
\label{sec:instance}

In order to describe the instance we consider, we first need to introduce the concept of $k$-cover-free families of sets. 

\begin{definition}
A family of sets $\mathcal{F}$ is called $k$-\emph{cover-free} if $A_0 \not \subseteq A_1 \cup A_2 \cup \dots \cup A_k$ holds for all distinct $A_0, A_1, \dots, A_k \in \mathcal{F}$.  
\end{definition}

We will be interested in constructing the largest possible family of sets that is $k$-cover-free. Let $T(n,k)$ denote the maximum cardinality of a $k$-cover-free family of sets $\mathcal{F}$. We use the following bound from~\cite{Furedi96}, attributed there to~\cite{KautzSingleton64}.  

\begin{theorem}[\cite{KautzSingleton64}] For all $n, k$, it holds that 

$$ \Omega \left({\frac{1}{k^2}}\right) \leq \frac{\log(T(n,k))}{n}.$$

In other words, there exists a family of sets $\mathcal{F}_k$ that is $k$-cover-free and $|\mathcal{F}_k| \geq 2^{\Omega\left(\frac{n}{k^2}\right)}$.

\end{theorem}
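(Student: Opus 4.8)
The plan is to prove the lower bound $T(n,k)\ge 2^{\Omega(n/k^2)}$ by a standard first-moment argument that builds the cover-free family at random. Fix the inclusion probability $p=\tfrac{1}{k+1}$ and a target size $m=2^{cn/k^2}$ for a small absolute constant $c>0$ to be pinned down at the end. Draw sets $A_1,\dots,A_m\subseteq[n]$ independently, where each $A_\ell$ contains each coordinate of $[n]$ independently with probability $p$. I will show that with positive probability the $A_\ell$ are pairwise distinct and form a $k$-cover-free family; any realization witnessing this event is the desired $\mathcal{F}_k$, and then $\log T(n,k)\ge\log m=\Omega(n/k^2)$, which is exactly the claimed bound in both stated forms.

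The heart of the argument is bounding the probability of a single ``bad'' configuration. Fix distinct indices $i_0,i_1,\dots,i_k$. The inclusion $A_{i_0}\subseteq A_{i_1}\cup\dots\cup A_{i_k}$ fails precisely when some coordinate $x$ lies in $A_{i_0}$ but in none of $A_{i_1},\dots,A_{i_k}$; for a fixed $x$ this occurs with probability $p(1-p)^k$, independently across coordinates. Hence
\begin{equation*}
\Pr\big[A_{i_0}\subseteq A_{i_1}\cup\dots\cup A_{i_k}\big]=\big(1-p(1-p)^k\big)^n\le\exp\!\big(-n\,p(1-p)^k\big).
\end{equation*}
With $p=\tfrac{1}{k+1}$ we have $p(1-p)^k=\tfrac{1}{k}\big(1-\tfrac{1}{k+1}\big)^{k+1}\ge\tfrac{1}{4k}$ for all $k\ge1$, since $(1-\tfrac{1}{k+1})^{k+1}$ increases toward $1/e$ and equals $1/4$ at $k=1$. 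So each bad configuration occurs with probability at most $\exp(-n/(4k))$.

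It remains to close the union bound. The number of bad configurations — a choice of a ``covered'' set $A_{i_0}$ together with $k$ further distinct sets whose union might contain it — is at most $m\binom{m}{k}\le m^{k+1}$, and there are at most $\binom{m}{2}$ coincidence events $\{A_i=A_j\}$, each of probability at most $\max_S\Pr[A_1=S]=(1-p)^n\le\exp(-n/(k+1))$. Thus the probability that the random collection fails to be a distinct $k$-cover-free family is at most $m^{k+1}\exp(-n/(4k))+m^2\exp(-n/(k+1))$. Taking logarithms base $2$ and using $k+1\le 2k$, we get $m^{k+1}=2^{cn(k+1)/k^2}\le 2^{2cn/k}$ and likewise $m^2\le 2^{2cn/k}$, so the whole failure probability is at most $2\cdot 2^{(n/k)(2c-1/(4\ln 2))}$, which is strictly below $1$ once $c<\tfrac{1}{8\ln 2}$ is fixed (and $n$ is not too small, the statement being asymptotic). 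I expect the only real obstacle to be arithmetic rather than conceptual: choosing $p$ to (near-)maximize $p(1-p)^k$ and then balancing the union-bound exponent $cn(k+1)/k^2=\Theta(cn/k)$ against the per-event gain $\Omega(n/k)$ so that $c$ can be chosen independently of $n$ and $k$. (Alternatively one may cite the explicit Kautz--Singleton superimposed-code construction from concatenated Reed--Solomon codes, which attains the same $2^{\Omega(n/k^2)}$ bound, but the probabilistic argument above is self-contained.)
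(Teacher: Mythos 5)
The paper does not prove this theorem at all: it simply cites the bound from Kautz and Singleton (via F\"uredi's survey) as a black box. Your probabilistic-method derivation is therefore a genuinely different route, and it is correct. Choosing $p=\tfrac{1}{k+1}$ does maximize $p(1-p)^k$, the per-coordinate escape probability, and $p(1-p)^k=\tfrac{1}{k}\bigl(1-\tfrac{1}{k+1}\bigr)^{k+1}\ge\tfrac{1}{4k}$ holds since $\bigl(1-\tfrac{1}{k+1}\bigr)^{k+1}$ increases from $1/4$ to $1/e$. The union bound over the at most $m^{k+1}$ inclusion events and $\binom{m}{2}$ coincidence events closes as you say once $c<\tfrac{1}{8\ln 2}$, up to small-$n/k$ boundary cases that are trivial because there the target $2^{cn/k^2}$ is a small constant and singletons $\{1\},\dots,\{n\}$ already give a $k$-cover-free family of size $n$. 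What the comparison buys: the original Kautz--Singleton argument is an \emph{explicit} construction (concatenating a Reed--Solomon outer code with an identity/unary inner code) that yields superimposed codes with the same rate $\Omega(1/k^2)$; your first-moment argument is shorter and self-contained but non-constructive. For the purposes of Theorem~\ref{thm:mainLB}, where only existence of the family $\mathcal{F}^L$ is used, either proof suffices, and in fact the near-optimal rate $\Theta\bigl(\tfrac{\log k}{k^2}\bigr)$ (due to later work of D'yachkov--Rykov and others) would sharpen the constants slightly but not the qualitative $\exp(\Omega(n^{\varepsilon}))$ conclusion.
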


We will use $k$-cover-free sets to construct pairs of sequences that have large menu gaps. Then, we will take this pair of sequences and show how to obtain a $n$-dimensional distribution whose revenue gap is lower bounded by the menu gap of the underlying pair of sequences. We are now ready to define the instance of interest. Assume $k \leq n^{1/2-\varepsilon}$.

\begin{definition}
\label{def:instance}
Let $\mathcal{F}^L$ be a $k$-cover-free family of sets of maximal size, i.e., such that $|\mathcal{F}^L| = T(n, k) = \exp\left(\Omega(n/k^2)\right)$. Set $\vec{x}^L_i = \vec{q}^L_i = \vec{e}_{A_i} \forall A_i \in \mathcal{F}^L$, where by $\vec{e}_S$ we denote the $n$-dimensional indicator vector for set $S$.
\end{definition}

Observe that unlike other constructions (e.g.,~\cite{HartN17},~\cite{PSW22}) the number of points in each pair of sequences is finite. Thus this instance cannot witness an infinite revenue gap, but we claim it can witness an exponential revenue gap. 

\subsubsection{Part 2: The Instance has Large Menu Gap}

In the next step of the proof of Theorem~\ref{thm:mainLB} we will show that the constructed instance has large menu gap. 

\begin{claim}
\label{cl:expmenugap}
For the instance described in Definition~\ref{def:instance}, it holds that 
\begin{equation*}
    \MenuGapTwo{k}{X^L}{Q^L} \geq  \frac{|\mathcal{F}^L|}{n}.
\end{equation*}
\end{claim}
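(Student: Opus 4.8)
The plan is to show that every point in the constructed sequence $(X^L, Q^L)$ contributes at least $1/n$ to the menu gap $\MenuGapTwo{k}{X^L}{Q^L}$, and since there are $|\mathcal{F}^L|$ points, the bound follows by summing. Recall that for each $A_i \in \mathcal{F}^L$ we set $\vec{x}^L_i = \vec{q}^L_i = \vec{e}_{A_i}$, with $\vec{q}^L_0 = \vec{0}^n$ prepended. So for a fixed index $i$ we need to lower bound
$$
\gaptwo{k}{i}{X^L}{Q^L} = \min_{j_1,\dots,j_k < i} \vec{e}_{A_i} \cdot \bigl(\vec{e}_{A_i} - \lottery(\vec{e}_{A_{j_1}}, \dots, \vec{e}_{A_{j_k}})\bigr),
$$
and then divide by $\|\vec{x}^L_i\|_1 = |A_i| \le n$.

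The key observation is that for $0/1$ vectors the lottery operation is just coordinate-wise OR: $\lottery(\vec{e}_{A_{j_1}}, \dots, \vec{e}_{A_{j_k}}) = \vec{e}_{A_{j_1} \cup \dots \cup A_{j_k}}$, since $1 - \prod_{\ell}(1 - (\vec{e}_{A_{j_\ell}})_d)$ equals $1$ exactly when $d$ lies in some $A_{j_\ell}$. Therefore $\vec{e}_{A_i} \cdot \lottery(\cdots) = |A_i \cap (A_{j_1} \cup \dots \cup A_{j_k})|$, and since $\vec{e}_{A_i} \cdot \vec{e}_{A_i} = |A_i|$, the inner expression equals $|A_i \setminus (A_{j_1} \cup \dots \cup A_{j_k})|$, i.e. the number of elements of $A_i$ not covered by the chosen $k$ sets. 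The $k$-cover-free property says precisely that $A_i \not\subseteq A_{j_1} \cup \dots \cup A_{j_k}$ for any $k$ distinct indices different from $i$, so this difference is at least $1$. One small technical wrinkle: the minimization in \eqref{def:gap} ranges over $j_1, \dots, j_k < i$ without requiring distinctness, and it also allows picking $j_\ell = 0$ (the zero vector). Picking $\vec{q}_0 = \vec{0}^n$ only shrinks the union, so it cannot help the adversary; and repeating an index is equivalent to using fewer than $k$ distinct sets, which by Claim~\ref{cl:bmrev}-style monotonicity (or directly: a union of $\le k$ sets is contained in a union of exactly $k$ sets after padding with arbitrary other members of $\mathcal{F}^L$, of which there are plenty since $|\mathcal{F}^L| > k+1$) is still not a superset of $A_i$. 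Hence $\gaptwo{k}{i}{X^L}{Q^L} \ge 1$ for every $i$, so $\gaptwo{k}{i}{X^L}{Q^L}/\|\vec{x}^L_i\|_1 \ge 1/|A_i| \ge 1/n$.

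Summing over all $i \in \{1, \dots, |\mathcal{F}^L|\}$ gives $\MenuGapTwo{k}{X^L}{Q^L} \ge |\mathcal{F}^L|/n$, as claimed. The only part requiring any care is the bookkeeping around the non-distinctness and the zero vector in the definition of $\gaptwo{k}{i}{\cdot}{\cdot}$ — ensuring that the cover-free guarantee, which is stated for $k$ \emph{distinct} sets, still applies to the (possibly degenerate) minimizing tuple. This is handled by noting that padding a union of fewer than $k$ distinct members of $\mathcal{F}^L$ up to $k$ distinct members (possible because $|\mathcal{F}^L| \ge 2^{\Omega(n/k^2)} > k+1$ in the relevant regime) only enlarges the union, and the enlarged union still fails to cover $A_i$ by cover-freeness; so the minimum is witnessed by a configuration that is genuinely $k$ distinct sets avoiding $i$, where the bound of $\ge 1$ is exactly the cover-free hypothesis. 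I expect this degeneracy check to be the only nontrivial step; everything else is the direct translation of set cover-freeness into the inner-product language of Definition~\ref{def:main}.
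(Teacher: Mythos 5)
Your proof is correct and follows essentially the same approach as the paper: interpret the lottery of $0/1$ vectors as the indicator of a union of sets, apply the $k$-cover-free property to lower bound each normalized gap contribution by $1/n$, and sum over the $|\mathcal{F}^L|$ points. Your extra attention to degenerate minimizing tuples (repeated indices, or $j_\ell = 0$ giving the all-zeros vector) is a valid concern that the paper's proof glosses over, and your padding argument correctly shows that $k$-cover-freeness still rules out coverage by any union of at most $k$ not-necessarily-distinct sets from $\mathcal{F}^L \cup \{\emptyset\}$.
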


\begin{proof}

\begin{align*}
        \gaptwo{k}{i}{X^L}{Q^L} & = \min_{j_1, j_2, \dots, j_k \leq i} \frac{\vec{e}_{A_i}}{|\vec{e}_{A_i}|} \cdot \left(\vec{e}_{A_i} - \lottery(\vec{e}_{A_{j_1}},\vec{e}_{A_{j_2}},\dots, \vec{e}_{A_{j_k}})\right) \\
        & \geq \min_{j_1, j_2, \dots, j_k \neq i} \frac{\vec{e}_{A_i}}{|\vec{e}_{A_i}|} \cdot \left(\vec{e}_{A_i} - \lottery(\vec{e}_{A_{j_1}},\vec{e}_{A_{j_2}},\dots, \vec{e}_{A_{j_k}})\right) \\ 
        & \geq \min_{j_1, j_2, \dots, j_k \neq i} \frac{\vec{e}_{A_i}}{|\vec{e}_{A_i}|} \cdot \left(\vec{e}_{A_i} - \vec{e}_{\cup_{\ell=1}^k A_{j_\ell}}\right) \\
        & \geq \min_{j_1, j_2, \dots, j_k \neq i} \frac{|A_i| - |A_i \cap \left(\cup_{\ell=1}^k A_{j_\ell}\right)|}{|A_i|} \\ 
        & = \min_{j_1, j_2, \dots, j_k \neq i} \frac{|A_i \setminus \left(\cup_{\ell=1}^k A_{j_\ell}\right)|}{|A_i|} \geq  \frac{1}{n}.
\end{align*} 

The first inequality observes that the gap only worsens when we allow for \emph{all other} points to be used, rather than just those that come before $i$. The second inequality observes that, since all vectors inside the argument have integral coordinates, the output is the indicator vector over the union of the inputs. The third inequality observes that for any two sets $S, T,$ $\vec{e}_{S} \cdot \vec{e}_{T} = |S \cap T|$. The fourth inequality restates the previous line. The last inequality uses  $|A_i| \leq n$ in the denominator and the fact that $\mathcal{F}^L$ is $k$-cover free, thus $A_i \setminus \left(\cup_{\ell=1}^k A_{j_\ell}\right) \neq \emptyset$ for any choice of $A_{j_\ell}$ in the numerator. 

Thus, $\gaptwo{k}{i}{X^L}{Q^L} \geq 1/n$ for all $i \in \mathcal{F}^L$. Therefore, $\MenuGapTwo{k}{X^L}{Q^L} \geq \frac{|\mathcal{F}^L|}{n}$.
\end{proof}

\subsubsection{Part 3: from Sequences to Distributions}

We now present the final piece for the proof of Theorem~\ref{thm:mainLB}. Lemma~\ref{lem:hnconverse} states that given a pair of sequences $(X,Q)$ of a certain form, we can find a distribution whose revenue gap is at least as large as $\MenuGapTwo{k}{X}{Q}$. This is a slight generalization of a lemma from cite~\cite{HartN17}. The experienced reader will notice that our construction uses many similar ideas. Their work makes no assumptions on the sequences $X,Q,$ but only works for the case of $k=1$.

\begin{lemma}
\label{lem:hnconverse}
Let $(X,Q)$ be a pair of sequences such that $\vec{x}_i \in \{0,1\}^n, \vec{q}_i \in \{0,1\}^n$ for all $i$. Moreover, suppose $\gaptwo{k}{i}{X}{Q} \geq \frac{1}{n}$ for all $i$. Then, there exists a distribution $\distD \in \mathbb{R}^n$ such that for any integer $k$,

\begin{equation*}
    \frac{\bkrev{\distD}{k}}{\brev{\distD}} \geq \frac{\MenuGapTwo{k}{X}{Q}}{2n}.
\end{equation*}
\end{lemma}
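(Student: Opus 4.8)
\textbf{Proof proposal for Lemma~\ref{lem:hnconverse}.}

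The plan is to realize $(X,Q)$ as an explicit finite distribution together with a buy-$k$ incentive-compatible menu that mirrors the sequence $Q$, in the spirit of the ``reverse direction'' of~\cite{HartN17}. First I would rescale the valuations: pick rapidly growing weights $\lambda_1 < \lambda_2 < \cdots < \lambda_N$ with $\lambda_{i+1}/\lambda_i \ge 2n^2$ (for instance $\lambda_i = (2n^2)^{i-1}$), and let the $i$-th type be the vector $\lambda_i \vec x_i$. Since $\vec x_i \in \{0,1\}^n$, the bundle value of type $i$ is $b_i := \lambda_i \norm{\vec x_i}_1$, and the growth condition forces $b_1 < b_2 < \cdots < b_N$ with consecutive ratios at least $2n$. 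I then put the \emph{equal-revenue} distribution on these types, $f_i = b_1\bigl(\tfrac1{b_i} - \tfrac1{b_{i+1}}\bigr)$ for $i < N$ and $f_N = b_1/b_N$, so that $\sum_i f_i = 1$ and bundling at each threshold price $b_j$ extracts exactly $b_1$; hence $\brev{\distD} = b_1$. The separation of the $b_i$'s also yields the crucial pointwise bound $f_i \ge b_1/(2b_i)$.

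Next I would introduce the menu $\mech$ whose entries are $(\vec q_i, \pi_i)_{i=1}^N$ together with the null option $(\vec 0, 0)$, where $\pi_i := \tfrac12\,\lambda_i\,\gaptwo{k}{i}{X}{Q}$; by hypothesis $\gaptwo{k}{i}{X}{Q}\ge 1/n>0$, so every price is strictly positive, and one checks $\pi_i \le \tfrac12\lambda_i\,\vec x_i\cdot\vec q_i$ so that entry $i$ gives type $i$ strictly positive utility. The core of the argument is to verify that, under $\distD$, this menu is buy-$k$ incentive compatible with type $i$ strictly preferring entry $(\vec q_i,\pi_i)$. Given any multiset $\Lambda$ of at most $k$ entries that type $i$ might deviate to, there are three cases. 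If $\Lambda$ uses only entries with index $<i$, then the defining inequality $\vec x_i\cdot(\vec q_i - \lottery(\Lambda)) \ge \gaptwo{k}{i}{X}{Q}$ together with the non-negativity of the prices in $\Lambda$ shows entry $i$ is strictly better, with slack $\tfrac12\lambda_i\gaptwo{k}{i}{X}{Q}$. If $\Lambda$ uses any entry with index $j>i$, the deviation costs at least $\pi_j \ge \lambda_j/(2n) \ge n\lambda_i$, which already exceeds the total value $\lambda_i\norm{\vec x_i}_1 \le n\lambda_i$ that type $i$ can derive from \emph{any} lottery, so the deviation is worse than not buying. Finally, if $\Lambda$ contains entry $i$ itself (plus earlier entries), I would use that $\vec q_i$ is $0/1$ and $\mathrm{supp}(\vec x_i)\subseteq\mathrm{supp}(\vec q_i)$ --- which holds in the cover-free construction of Definition~\ref{def:instance} where $\vec x_i = \vec q_i$, and is the only place the fully general statement needs a mild extra assumption --- so by Claim~\ref{cl:lotprod} the lottery already allocates $1$ on all of $\mathrm{supp}(\vec x_i)$, giving $\vec x_i\cdot\lottery(\Lambda) = \vec x_i\cdot\vec q_i$; adding the remaining positively priced entries on top of entry $i$ then only decreases utility. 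Therefore $\mech$ is a valid buy-$k$ IC mechanism for $\distD$ and $\bkrev{\distD}{k} \ge \rev{\distD,\mech} = \sum_i f_i\pi_i$.

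To conclude I would substitute the bounds: $\sum_i f_i\pi_i \ge \sum_i \tfrac{b_1}{2b_i}\cdot\tfrac12\lambda_i\gaptwo{k}{i}{X}{Q} = \tfrac{b_1}{4}\sum_i \gaptwo{k}{i}{X}{Q}/\norm{\vec x_i}_1 = \tfrac{b_1}{4}\MenuGapTwo{k}{X}{Q}$, using $b_i = \lambda_i\norm{\vec x_i}_1$. Dividing by $\brev{\distD} = b_1$ gives $\bkrev{\distD}{k}/\brev{\distD} \ge \tfrac14\MenuGapTwo{k}{X}{Q} \ge \MenuGapTwo{k}{X}{Q}/(2n)$ for $n\ge 2$; the extra slack comfortably absorbs any loss incurred by a more careful treatment of ties or of the general (not necessarily $\vec x_i=\vec q_i$) setting.

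I expect the main obstacle to be precisely the incentive-compatibility verification, in particular ruling out deviations to multisets that mix ``cheap'' early entries with entry $i$ itself or with expensive later entries. The super-geometric scaling $\lambda_i$ is what kills deviations involving later entries, the $0/1$ idempotence of $\lottery$ (via Claim~\ref{cl:lotprod}) is what handles deviations that re-buy entry $i$, and the hypothesis $\gaptwo{k}{i}{X}{Q}\ge 1/n$ is exactly what keeps all menu prices strictly positive so that these comparisons have sign. The equal-revenue computation of $\brev{\distD}$ and the final summation are then routine.
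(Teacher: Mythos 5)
Your argument is correct and follows essentially the same route as the paper's proof: scale type $i$ by a super-geometric constant, place a near-equal-revenue bundling distribution on the types, set menu price $i$ proportional to $\gaptwo{k}{i}{X}{Q}$ times the scaling constant, and verify buy-$k$ incentive compatibility by the same three-case split (deviations to earlier entries handled by the definition of $\gaptwo{k}{i}{X}{Q}$, deviations involving later entries killed because their price exceeds any attainable value, and combinations re-using entry $i$ handled by idempotence of the $0/1$ lottery). One thing you do more carefully: the paper writes the mechanism's revenue as $\sum_i f(\vec v_i)p_i = \sum_i \gaptwo{k}{i}{X}{Q}$, silently dropping the $\norm{\vec x_i}_1$ normalization appearing in the definition of $\MenuGapTwo{k}{X}{Q}$; your choice $b_i = \lambda_i\norm{\vec x_i}_1$ handles this cleanly and is the reason your constant is better than the stated $1/(2n)$. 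Your observation that the ``re-buy entry $i$'' case needs $\mathrm{supp}(\vec x_i)\subseteq\mathrm{supp}(\vec q_i)$ is accurate --- the paper's own proof explicitly invokes $\vec x_i = \vec q_i$ at that step, even though the lemma statement only requires the vectors to be $0/1$; this is a real overgeneralization in the statement as written, but harmless since the lemma is applied only to the instance of Definition~\ref{def:instance}, where $\vec x_i = \vec q_i$.
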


\begin{proof}

In order to construct a distribution $\distD$ we need both a valuation $\hat{x}$ and a density function $f(hat{x})$. Let $C_i = (n+1)^{2i}$. Then we define distribution $\distD$ by setting $\vec{\hat{x}}_i = \vec{x}_i \cdot C_i$, $f(\vec{\hat{x}}_i) = \frac{1}{C^i}$. It is clear that this defines a valid distribution, i.e., $f(\vec{\hat{x}}_i) \geq 0$ and $\sum_{i} f(\vec{\hat{x}}_i) \leq 1$. Place the rest of the probability mass at a valuation of $\vec{0}^n$.  

We will now show that $\bkrev{\distD}{k} \geq \MenuGapTwo{k}{X}{Q}$ and $\brev{\distD} \leq 2n$. Consider the menu $\mathcal{M}$ which offers allocation $\vec{q}_i$ at price $p_i =  \gaptwo{k}{i}{X^L}{Q^L} \cdot C_i$. Let $\mathcal{M}_i$ be the sub-menu of $\mathcal{M}$ consisting of the first $i$ menu entries and the $(0,\vec{0}^n)$ entry. We will first claim that $\mathcal{M}$ is a buy-$k$ menu. 

\begin{claim}
Any valuation $\hat{x}_i$ prefers to purchase the menu entry $(p_i, \vec{q}_i)$ to any other combination of $k$ menu entries from $\mathcal{M}_{i}$.
\end{claim}

\begin{proof}
First, observe that if the valuation $\hat{x}_i$ purchases at least one copy of $(p_i, \vec{q}_i)$, because $\vec{x}_i = \vec{q}_i$ and $\vec{q}_i \in \{0,1\}^n$, there is no value in purchasing any other menu entry. This is because a buyer with valuation $\hat{x}_i$ is only interested in the items in the support of $\vec{q}_i$, all of which are given to the buyer with probability $1$. There is no benefit from purchasing any other lottery. Thus, any other reasonable deviations involve buying up to $k$ menu entries from $\mathcal{M}_{i-1}$. The utility from purchasing any such combination is upper bounded by $\hat{x}_i \cdot \lottery(\vec{q}_{i_1}, \vec{q}_{i_2}, \dots, \vec{q}_{i_k}).$ The utility from purchasing $(p_i, \vec{q}_i)$ is $\vec{\hat{x}}_i \cdot \vec{q}_i - p_i$. By choice of $p_i$, $\vec{\hat{x}}_i$ we get that this is 
$$C_i \vec{x}_i \cdot \vec{q}_i - C_i \cdot \gaptwo{k}{i}{X^L}{Q^L} \geq C_i \vec{x}_i \cdot \lottery(\vec{q}_{i_1}, \vec{q}_{i_2}, \dots, \vec{q}_{i_k}),$$
where the inequality follows from recalling the definition of $\gaptwo{i}{k}{X}{Q}$ (and cancelling the $C_i$). 
\end{proof}

The next thing we need to show is that the valuation will not prefer to buy any option on $\mathcal{M}\setminus \mathcal{M}_i$. The utility from purchasing the preferred option is at most $C_i \cdot n$. The cost of any further option is at least $C_{i+1} \cdot \gaptwo{k}{i+1}{X^L}{Q^L}$. By assumption, $\gaptwo{k}{i}{X}{Q} \geq \frac{1}{n}$. Therefore, the price of any option with $j > i$ is at least $C_{i+1}/n$. By construction, the price alone for any option $(p_j,\vec{q}_j)$ with $j > i$ is already greater than the possible utility the buyer could get. Thus, purchasing such menu entries would give them non-positive utility. Therefore, a valuation $\hat{x}_i$ will purchase exactly one copy of the menu entry $(p_i,\vec{q}_i)$. The revenue of mechanism $\mathcal{M}$ is $\sum_{i} f(\hat{x}_i) p_i = \sum_{i} \gaptwo{k}{i}{X}{Q} = \MenuGapTwo{k}{X}{Q}$. Since $\mathcal{M}$ is a buy-$k$ menu, $\bkrev{\distD}{k} \geq \bkrev{\distD, \mathcal{M}}{k}$.

All that remains is to show that the revenue of bundling is at most $2n$. Note that the value a valuation $\hat{x}_i$ has for the bundle is at most $n C_i \leq C_{i+1}$. Thus, any price between $(C_{i-1} \cdot |\vec{x}_{i-1}|, C_i \cdot |\vec{x}_i|]$ will sell to the same set of bidders. Since we want to maximize revenue, it only makes sense to consider prices $b_i = C_i \cdot |\vec{x}_i|$ for all $i$. Consider any such price $b_i$ for the bundle. The revenue is $b_i \cdot \Pr_{\hat{x}_j\sim \distD}( C_j |\vec{x}_j| \geq b_i) = b_i \cdot \Pr_{\hat{x}_j\sim \distD}(C_j \geq C_i) = b_i \cdot \sum_{j \geq i} f(\hat{x}_j) = b_i \cdot \frac{2n}{C_i(2n-1)} \leq 2 b_i \cdot \frac{1}{C_i} \leq 2n.$
\end{proof}

\subsubsection{Part 4: Putting it all together} 

We are now ready to present the Proof of Theorem~\ref{thm:mainLB}.

\begin{proof}[Proof of Theorem~\ref{thm:mainLB}]

Consider the instance $(X^L,Q^L)$ from Definition~\ref{def:instance}. By Claim~\ref{cl:expmenugap}, the instance satisfies $\MenuGapTwo{k}{X^L}{Q^L} \geq |\mathcal{F}^L|/n$ and has only integral vectors. By Lemma~\ref{lem:hnconverse}, we can turn the pair of sequences into a distribution $\distD$ with $\bkrev{\distD}{k}/\brev{\distD} \geq \frac{\MenuGapTwo{k}{X^L}{Q^L}}{2n}$. Thus, 

\begin{equation*}
    \frac{\bkrev{\distD}{k}}{\brev{\distD}} \geq \frac{|\mathcal{F}^L|}{2n^2}. 
\end{equation*}
Finally, for $k \leq n^{1/2-\varepsilon},$ note that $|\mathcal{F}^L| = \exp(\Omega({n^\varepsilon}))$. Therefore, the right hand side becomes $\frac{\exp(\Omega({n^\varepsilon}))}{2n^2}$.
\end{proof}

\section{Conclusion}
\label{sec:conclusion}

In this paper we initiate the study of fine-grained buy-many mechanisms. The motivation for our work stems from a simple observation: there exist distributions for which the buy-one revenue gap $\rev{\distD}/\brev{\distD}$ is unbounded, but for all distributions the buy-many revenue gap $\bmrev{\distD}/\brev{\distD}$ is finite. There is a wide worst-case revenue gap between optimal buy-many and optimal buy-one mechanisms, which begs the question: how much must we constraint the seller's choice of mechanism until the revenue gap becomes finite for all distributions? In order to answer this question, we introduce the concept of buy-$k$ mechanisms, those where the buyer can buy any multi-set of up to $k$ many menu choices. We show that buy-$n$ mechanisms are not much better than bundling. For all distributions $\distD$, the revenue from bundling recovers a $O(n^2)$ fraction of the optimal buy-$n$ revenue when the buyer is additive and a $O(2^n\cdot n^2)$ fraction of the optimal buy-$n$ revenue when the buyer has an arbitrary monotone valuation. Our proof uses a recent framework proposed in~\cite{HartN17, PSW22} for buy-one mechanisms. While in those works, the framework has been used to produce examples of inapproximable distributions, our work shows that it can be used to prove approximation guarantees. Moreover, all our results hold for the case of an adaptive buyer. 

There are numerous questions for future work:
\begin{itemize}
    \item We have already outlined one interesting question for future work, to prove or disprove Conjecture~\ref{app:conj}. Even a slight weakening of the conjecture, just showing that for every $k$ there exists some $\distD_k$ such that $\bkrev{\distD_k}{k} > \bkrev{\distD_k}{k+1}$ would be interesting. We discuss our candidate instance for Conjecture~\ref{conj} in Appendix~\ref{app:conj}. 
    \item We showed that restricting the seller to be buy-$n$ incentive compatible sufficed to obtain a $O(2^n\cdot n^2)$-approximation via bundling. It would be interesting if the exponential bound is tight in general; if it is, it would be interesting to characterize the distributions for which a polynomial approximation is possible.
    \item It would be interesting to understand the role of $k$ in whether or not the revenue gap is finite. Concretely, we would like to answer the following question: for a given $n$, what is the smallest $k$ for which $\bkrev{\distD}{k}/\brev{\distD}$ is finite for all $\distD$? Ultimately, we would like to understand the exact trade off between $k, n$ in the revenue gap, answering Open Question~\ref{oq:1}.
    \item Future work could also follow the steps of~\cite{CTT20} in understanding whether or not fine-grained buy-many mechanisms satisfy revenue monotonicity, or whether or not fine-grained buy-many mechanisms admit $(1-\varepsilon)$-approximations via finite-sized mechanisms (and what role $k$ has in answering any of these questions).  
    \item Another interesting avenue would be to explore the power that buy-many or fine-grained buy-many mechanisms have over product distributions. There is a long line of work with elegant approximation results for the case of product distributions, but progress towards polynomial time approximation schemes has been slow. It is possible that restricting the seller's choice of mechanism improves the performance of existing algorithms or allows for the discovery of more efficient ones.  
    \item Finally, very little is understood computationally about buy-many and fine-grained buy-many mechanisms. For instance, it is not immediately clear how to efficiently test whether or not a mechanism is buy-many or buy-$k$ for some $k$. 
\end{itemize}

We hope that our results strengthen the importance of developing a deeper understanding of fine-grained buy-many mechanisms.


\bibliographystyle{ACM-Reference-Format}
\bibliography{references.bib}

\appendix

\section{Adaptive Buy-Many Mechanisms} 
\label{sec:adapt}

In this Appendix, we briefly review another notion of buy-$k$ mechanisms, which we refer to as \emph{adaptive} buy-$k$ mechanisms. We will define them to be analogues of the adaptive buy-many mechanisms as defined in~\cite{CTT19}. 

In the standard definition of buy-$k$ mechanisms, formalized in section~\ref{sec:Notation}, the buyer may purchase any multi-set of menu options of size up to $k$. In a randomized mechanism, this corresponds to committing to up to $k$ options and \emph{only after that} receiving their outcome allocations; in other words, the \emph{choice} of, say, second option, is \emph{not} a function of probabilistic outcomes of the lottery for the first option. This is formally captured in our definition of the function $\lottery(\cdot)$.

We can naturally also consider a variant of this definition that allows for \emph{adaptively} choosing the options to purchase, based on the probabilistic outcomes of the lotteries for the prior options. In this case, the buyer can commit to a \emph{strategy} of different ways of purchasing up to $k$ options, while seeing the outcome of each purchased lottery before purchasing the next option. A strategy can be thought of as a $2^n$-ary tree of depth at most $k$ where each node identifies what to purchase on the next step depending on which items of the current purchased lottery ``succeeded'' or ``failed''. The buyer is then interested in a strategy with maximum expected payoff. Analogous to~\cite{CTT19}, we say a mechanism $\mech$ is \emph{adaptive buy-$k$ incentive-compatible} if for every valuation of the buyer, the strategy with maximum expected payoff consists of buying a single option (see also Section 2 of ~\cite{CTT19} for more details on this definition).

As was observed in~\cite{CTT19}, it is easy to see that since the set of non-adaptive buy-$k$ options are all valid strategies for an adaptive buy-$k$ mechanism, any mechanism that is adaptive buy-$k$ incentive-compatible is also (non-adaptive) buy-$k$ incentive-compatible (but the reverse direction is not necessarily true). As a corollary of this, we can immediately extend our bounds in Theorems~\ref{thm:MainN} and~\ref{thm:GeneralVal} to adaptive buy-$k$ incentive-compatible mechanisms. 

Finally recall that the construction of Theorem~\ref{thm:mainLB} presented in Section~\ref{sec:few} gave a deterministic mechanism. When a mechanism is deterministic, there is no distinction between adaptive and non-adaptive strategies because there is no randomness in the allocation. Therefore, the lower bounds of Theorem~\ref{thm:mainLB} also extend to the case of adaptive buyers.

\section{$\brev{\distD}$ Can Not Give a Sublinear Approximation to $\bkrev{\distD}{k}$}
\label{app:polyn}

In this section we show that Theorem~\ref{thm:MainN} can not be improved to a sublinear approximation factor for any $k$. 

\begin{claim}
\label{cl:polyappx}
There exists a distribution $\distD$ such that for a single, additive buyer and any $k$, 

$$ \brev{\distD} \leq \frac{2\cdot \bkrev{\distD}{k}}{n}.$$
\end{claim}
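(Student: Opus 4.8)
The plan is to construct a simple distribution on which bundling is weak but a buy-$k$ mechanism (in fact even a buy-one mechanism using item-pricing) performs much better, so that the ratio $\bkrev{\distD}{k}/\brev{\distD}$ is $\Omega(n)$. The natural candidate is the ``equal revenue''-style construction spread across the $n$ items independently: take $\distD$ to be a product distribution where each coordinate $v_i$ is $0$ with probability $1 - 1/n$ and equals $n$ with probability $1/n$ (or, to avoid integrality nuisances, is drawn so that $\Pr(v_i \geq t) = 1/(nt)$ on a suitable range, i.e.\ each item is individually an equal-revenue item scaled so that item-pricing extracts revenue $\Theta(1)$ per item). Under such a distribution, selling each item separately at its optimal (constant) price extracts revenue $\Theta(1)$ from each of the $n$ items, for a total of $\Theta(n)$; and item-pricing is a buy-one incentive-compatible mechanism, hence a valid buy-$k$ mechanism for every $k$, so $\bkrev{\distD}{k} = \Omega(n)$.

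\textbf{Key steps.} First I would fix the per-coordinate marginal precisely and verify it is a valid distribution; a clean choice is $v_i \in \{0, 1\}$ with a point mass, or better the two-point distribution $v_i = 0$ w.p.\ $1-1/n$ and $v_i = 1$ w.p.\ $1/n$, rescaled as needed. Second, I would lower-bound $\bkrev{\distD}{k}$: item pricing at price $p_i$ equal to the per-item optimal price is buy-one IC (prices are additive, so no combination of entries helps), hence buy-$k$ IC for all $k$ by Claim~\ref{cl:bmrev}-type monotonicity reasoning, and it collects $\sum_i p_i \Pr(v_i \geq p_i) = \Theta(n)$. Third — and this is the real content — I would upper-bound $\brev{\distD}$: the total value $\sum_i v_i$ is a sum of $n$ i.i.d.\ bounded random variables with mean $\Theta(1)$, so by a Chernoff/concentration bound it is $\Theta(1)$ with overwhelming probability and never exceeds $n$; hence for any bundle price $p$, either $p = O(1)$ and revenue is $O(1)$, or $p = \omega(1)$ and $\Pr(\sum v_i \geq p)$ decays (exponentially past the mean), so $p \cdot \Pr(\sum v_i \geq p) = O(1)$ as well. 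Thus $\brev{\distD} = O(1)$, and combining with $\bkrev{\distD}{k} = \Omega(n)$ gives the claimed $\brev{\distD} \leq 2\bkrev{\distD}{k}/n$ after choosing constants so the $2/n$ bound is literally met (one can scale values or tune $n$ large enough; alternatively state the bound as $\brev{\distD} \le O(1/n)\cdot \bkrev{\distD}{k}$ and absorb the constant).

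\textbf{Main obstacle.} The only delicate point is pinning down the constants so that the inequality reads exactly $\brev{\distD} \le 2\bkrev{\distD}{k}/n$ rather than $O(1/n)$; this requires either a careful concentration estimate for $\Pr(\sum v_i \ge p)$ (to show $\brev{\distD}$ is genuinely bounded by a small absolute constant, not just $O(1)$) or a slightly cleverer choice of marginal — e.g.\ $v_i = 1$ with probability exactly $1/n$ and $0$ otherwise, so $\sum v_i \sim \mathrm{Binomial}(n, 1/n)$, for which $\mathbb{E}[\sum v_i] = 1$ and standard tail bounds give $p \cdot \Pr(\mathrm{Bin}(n,1/n) \ge p) \le 1$ for all integers $p \geq 1$ by a direct union-bound/Markov argument, while item pricing at $p_i = 1$ earns $1/n$ per item for a total of exactly $1 \ge n \cdot \brev{\distD}/2$ once $\brev{\distD} \le 2/n$ is established. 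Everything else is routine: additivity of prices makes item-pricing trivially buy-$k$ IC for all $k$, and the distribution is a product distribution so the two computations decouple cleanly.
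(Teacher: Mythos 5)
Your construction does not work: the gap you need, $\brev{\distD} = O(\bkrev{\distD}{k}/n)$, is impossible to achieve with a product distribution of the kind you propose. Take the cleanest version you give, $v_i \in \{0,1\}$ with $\Pr(v_i=1)=1/n$ independently across $i$. Item pricing at $1$ per item earns $\sum_i 1\cdot(1/n) = 1$. But then for the grand bundle at price $1$, the revenue is $1\cdot\Pr\!\left(\sum_i v_i \geq 1\right) = 1 - (1-1/n)^n \to 1 - 1/e \approx 0.63$, so $\brev{\distD} \geq 1-1/e$ is a \emph{constant}, not $O(1/n)$. Your claim that ``standard tail bounds give $p\cdot\Pr(\mathrm{Bin}(n,1/n)\geq p)\leq 1$'' is true but irrelevant --- it shows $\brev{\distD}\leq 1$, whereas you need $\brev{\distD}\leq 2/n$, and the $p=1$ price already witnesses a constant lower bound. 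The same phenomenon kills the scaled variants: with $v_i\in\{0,n\}$, $\Pr(v_i=n)=1/n$, item pricing gets $n$ but bundling at price $n$ gets $\geq (1-1/e)n$; and with i.i.d.\ equal-revenue-type marginals, concentration of $\sum_i v_i$ in fact makes bundling \emph{beat} item pricing (this is the classical observation that bundling helps for i.i.d.\ items), so the ratio goes the wrong way.

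The structural reason your approach fails is the very concentration argument you invoke as a strength: for a product distribution with marginals of constant scale, $\sum_i v_i$ concentrates near its mean, so posting that mean (up to a constant) as the bundle price captures a constant fraction of the total expected welfare, which upper-bounds item-pricing revenue. To get $\srev{\distD}/\brev{\distD} = \Omega(n)$ you need the bundle value to be \emph{heavy-tailed} so that no single bundle price extracts much --- and that requires correlation. The paper does exactly this: it cites Example~15 of~\cite{HartN12}, a correlated ``staircase'' distribution with $\brev{\distD}=2$ and $\srev{\distD}=n$, then observes $\srev{\distD}\leq\bmrev{\distD}\leq\bkrev{\distD}{k}$ (item pricing is buy-many IC, plus Claim~\ref{cl:bmrev}), giving $\brev{\distD}=2\leq 2\bkrev{\distD}{k}/n$ immediately. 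The buy-$k$ part of your argument (item pricing is trivially buy-$k$ IC for all $k$) is fine; the gap is entirely in the choice of distribution.
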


\begin{proof} 
Consider the distribution $\distD$ from Example 15 of~\cite{HartN12}. This distribution satisfies the following property: $\brev{\distD} = 2, \srev{\distD} = n$, where $\srev{\cdot}$ is the optimal revenue attained by item-pricing. Observe that item-pricing is a buy-many incentive-compatible mechanism. Thus, $\srev{\distD} \leq \bmrev{\distD}$. Moreover, from Claim~\ref{cl:bmrev} we know that for any $k$, $\bmrev{\distD} \leq \bkrev{\distD}{k}$. Therefore, $$ \brev{\distD} \leq \frac{2\cdot \bkrev{\distD}{k}}{n}.$$
In particular, this shows that the ratio between $\brev{\distD}$ and $\bkrev{\distD}{k}$ for additive buyers is $\Omega(n)$, implying that Theorem~\ref{thm:MainN} can not be substantially improved.
\end{proof}

\section{Candidate Distribution for a Separation Between $\bkrev{\distD}{k}$ and $\bkrev{\distD}{k+1}$} 
\label{app:conj}

In this section we present the instance $\distD$ that proves Proposition~\ref{prop:separation} and that we posit could prove Conjecture~\ref{conj}. 

\begin{example}
\label{ex:conj}
Consider the following (correlated) distribution $\distD$ over two additive items:

\[   
\Pr(v_1 = a, v_2 = b) = 
     \begin{cases}
       1/6 &\quad\text{for } a = 3, b = 4\\
       1/6 &\quad\text{for } a = 4, b = 3\\
       4/6 &\quad\text{for } a = 5, b = 7.\\
     \end{cases}
\]
\end{example}
For the following proofs, we introduce some additional notation. Namely, we denote each $x_j \in \textsc{supp}({\distD})$ as ``buyer $j$” and let $u_{j}(\Lambda) = {x_j}(\lottery(\Lambda)) - \sum_{i=1} p(\vec{q}_i)$ be the utility gained by buyer $j$ from purchasing the multi-set of allocations $\Lambda$. We also prove the following useful claim about the utility function. 
\begin{claim}
\label{cl:lotconcave}
Let buyer $j$ have a valuation of $x_j$. Let $\Lambda$ be some multi-set of allocations such that $q_i \in \Lambda$ with multiplicity $m_i \geq 0$. Then, $u_{j}(\Lambda)$ is concave in $m_i$.
\end{claim}
\begin{proof}
We verify that the second derivative of the utility function for buyer $j$ with respect to $m_i$ is non-positive.
\begin{align*}
    \frac{\partial^2 u_j(\Lambda)}{\partial m_i^2} &= \frac{\partial^2}{\partial m_i^2}\left({x_j}(\lottery(\Lambda)) - \sum_{i=1} p(\vec{q}_i)\right)\\
    &= \frac{\partial^2 }{\partial m_i^2}\left(\sum_{t = 1} x_{jt} \cdot (1-(1-q_{1t})^{m_1}\cdot ... \cdot(1-q_{it})^{m_i}) - \sum_{i=1} p(\vec{q}_i)\right)\\
    &= \sum_{t = 1} -x_{jt} \cdot \ln{(1-q_{it})}^2\cdot(1-q_{1t})^{m_1}\cdot ... \cdot(1-q_{it})^{m_i}\\
    &\leq 0
\end{align*}
since $\Vec{q}_i \in [0,1]$ and $x_j \geq 0$.
\end{proof}

\begin{proof}[Proof of Proposition~\ref{prop:separation}]
The LP formulation presented in \cite{BriestCKW15} provides us with a simple way to compute optimal mechanisms in the buy-one world. Naturally, for some fixed input distribution, the LP constructs an allocation and payment function which maximizes expected revenue while maintaining feasibility and buy-one incentive-compatibility constraints. Through this LP, we can compute the revenue-optimal buy-one mechanism for the distribution $\distD$ shown below:
\begin{align*}
    \mech_1 = 
    \begin{cases}
        ((0.2,0.2), 1.4)\\
        ((1,0), 4)\\
        ((1,1), 11).\\
    \end{cases}
    \quad
\end{align*}
This mechanism achieves an expected revenue (subject to truncation) of $8.233$. More generally, the previous program can be altered to compute the optimal buy-$k$ mechanism for a distribution $\distD$, albeit by introducing non-convex constraints to enforce buy-$k$ incentive-compatibility. Despite non-convexity, for $k=2,3,4$, a non-convex optimizer was able to compute the revenue-optimal buy-$k$ mechanism. The annotated code samples for computing these optimal mechanisms can be found \href{https://github.com/vikram-kher/Computing-Revenue-Optimal-Buy-k-Auctions}{here}. For each respective value of $k$, the optimizer found the optimal expected revenue to be $[8.135, 8.096, 8.074]$, allowing us to conclude that $\bkrev{\distD}{1} > \bkrev{\distD}{2} > \bkrev{\distD}{3} > \bkrev{\distD}{4}$. Thus, we empirically validate Proposition~\ref{prop:separation}.
\end{proof}

We also conjecture that Example~\ref{ex:conj} is a good candidate for proving Conjecture~\ref{conj}. Fix some value of $k \geq 1$ and consider the following mechanism $\mech_k = \{t_1 =( (\alpha_k, \alpha_k), 7 \alpha_k), t_2 = ((1, 0), 4), t_3 = ((1, 1), 11)\}$, where $\alpha_k$ is the smallest positive real root of the $k$-degree polynomial $f_k(x) = 12 (1-(1-x)^k) - 7 k \cdot x - 1$. Intuitively, the value $\alpha_k$ has the following property: when buyer 3 purchases $k$ copies of $t_1$, they receive a utility of $1$, which is the same utility gained from purchasing a single copy of $t_3$.

\begin{claim}
\label{cl:conj1}
The mechanism $\mech_k$ is buy-$k$ incentive compatible for the distribution defined in Example~\ref{ex:conj}, and achieves expected revenue $R_k = 8 + 7/6 \cdot \alpha_k$.
\end{claim}

\begin{proof}
We first verify that $\mech_k$ is a buy-$k$ incentive-compatible mechanism for the distribution $\distD$. By Claim \ref{cl:lotconcave} the utility of buyer $j$ from purchasing a multi-set of allocations $\Lambda$ is concave in the number of identical allocations bought. Consequently, once we establish that a buyer achieves non-positive utility from purchasing a multi-set $\Lambda$ of allocations, we can conclude that purchasing any multi-set $\Lambda' \supseteq \Lambda$ will similarly yield non-positive utility for the buyer.

We proceed by calculating the utility received by each buyer for each allocation and showing that each buyer (weakly) maximizes their utility by purchasing a single ticket. First, consider buyer 1 who achieves $u_1((\alpha_k,\alpha_k)) = 0$, $u_1((1,0)) < 0$, and $u_1((1,1)) < 0$. Second, consider buyer 2 who achieves $u_2((\alpha_k,\alpha_k)) = 0$, $u_2((1,0)) = 0$, and $u_2((1,1)) < 0$. By Claim \ref{cl:lotconcave}, both buyer 1 and buyer 2 satisfy the buy-$k$ incentive-compatibility constraints since they will always prefer to purchase tickets $t_1$ and $t_2$, respectively, compared to another other multi-set of allocations.

For buyer 3, a slightly more careful analysis is required. We can first check that buyer 3 satisfies buy-one incentive-compatibility constraints since $0 <  u_3((\alpha_k,\alpha_k)) \leq 1$, $u_3((1,0)) = 1$, and $u_3((1,1)) = 1$. We must additionally check that the higher order incentive-compatibility constraints also hold since buyer $3$ obtains positive utility from individually buying $t_1$ and $t_2$. Notice, $t_2$ offers a deterministic allocation, so purchasing multiple copies of this ticket does not yield additional utility. This leaves two cases to analyze. Specifically, we can easily verify that $u_3(\{(\alpha_k,\alpha_k),(1,0)\}) = 1$ and $u_3(\{(\alpha_k,\alpha_k)_1,...,(\alpha_k,\alpha_k)_k\})= 1$, where the last equality follows directly from the definition of $\alpha_k$. Moreover, since $\alpha_k$ is the smallest positive root of the polynomial $f_k(x)$, we know that $u_3(\{(\alpha_k,\alpha_k)_1,...,(\alpha_k,\alpha_k)_i\}) < 1$ for $1 \leq i < k$. By Claim \ref{cl:lotconcave}, we can conclude that buyer 3 can achieve a utility of at most 1 by deviating from the ticket $t_3$. Thus, buyer 3 satisfies buy-$k$ incentive-compatibility constraints.

The revenue of $M_k$ follows from the fact that buyer $j$ will purchase ticket $t_j$ and from the density of the valuation classes. In the case where two allocations yield the same utility for a buyer, we break the tie in favor of the seller. As a result, $R_k = 2/3 \cdot 11 + 1/6 \cdot 4 + 1/6 \cdot 7\alpha = 8 + 7/6 \cdot \alpha_k$.
\end{proof}

\begin{claim}
\label{cl:conj2}
For all integers $k \geq 1$, the polynomial $f_k(x)$ always has a positive, real root below $1$. Moreover, the sequence  $\{\alpha_k\}_{k=1}^{\infty}$, where $\alpha_k$ is the smallest positive, real root of $f_k(x)$, is strictly decreasing.
\end{claim}

\begin{proof}
The proof follows via an inductive argument using the intermediate value theorem. First, recognize that $f$ is continuous and $f_k(0) = -1$ for all $k \geq 1$. By the proof of Proposition~\ref{prop:separation}, we have that $f_1(0.2) = 0 < 1$. Let us inductively assume that $f_k(\alpha_k) = 0$ and $\alpha_k \leq 1$ for $k \geq 1$. We wish to show that $f_{k+1}(\alpha_k) > 0$ as this would imply that $f_{k+1}(x)$ has a root $0 < \alpha_{k+1} < \alpha_k$ by the intermediate value theorem. To begin,
\begin{align*}
    f_{k+1}(\alpha_k) &= -12(1-(1-\alpha_k)^{k+1})\ -7\alpha_k\cdot (k+1)\ - 1\\
    &= -12(1-\alpha_k)^{k+1} + 12(1-\alpha_k)^k - 7\alpha_k\\
    &= -12(1-\alpha_k)^{k}\alpha_k - 7 \alpha_k,
\end{align*}
where the second equality follows from the fact that $\alpha_k$ is a root of $f_{k}(x)$. From the reduced form above, it suffices to show that $\alpha_k < (1 - (\frac{7}{12})^{1/k})$ to prove that $f_{k+1}(\alpha_k) > 0$. This can be accomplished by the following algebraic manipulations:
\begin{align*}
    f_k(1 - (7/12)^{1/k}) &= -12((7/12)^{1/k})^{k} -7k\cdot(1-(7/12)^{1/k}) + 11\\
    &= -7k \cdot(1-(7/12)^{1/k})+4\\
    &> 0.
\end{align*}
Since $f_k(1 - (7/12)^{1/k}) > 0$, by the intermediate value theorem, it follows that $\alpha_k < (1 - (\frac{7}{12})^{1/k})$. Consequently, we find that $f_{k+1}(\alpha_k) > 0$, finishing the proof that $0 < \alpha_{k+1} < \alpha_k < 1$. 
\end{proof}
Note that Claim~\ref{cl:conj2} clearly implies that the sequence $\{R_k\}_{k=1}^{\infty}$ is also strictly decreasing as it is the same sequence as $\{c_k\}_{k=1}^{\infty}$ but shifted by a constant. Given Claims~\ref{cl:conj1},~\ref{cl:conj2}, all that remains is to show that $\bkrev{\distD}{k} = R_k$ for all $k \geq 2$. Then since $\bkrev{\distD}{1} > R_2$ and the sequence $R_k$ is strictly decreasing, we would have that $\bkrev{\distD}{k} > \bkrev{\distD}{k+1}$ for all $k$, proving Conjecture~\ref{conj}. The principal obstacle to proving the missing step is find a technique to overcome non-convexity of the optimization problem. We currently have a candidate mechanism, but the non-convexity of the constraints hinders our ability to prove it is optimal via some notion of duality.

\end{document}